\documentclass[twoside,11pt,letter]{article}
\usepackage{times}
\usepackage{amsmath}
\usepackage{amsfonts}
\usepackage{amssymb}
\usepackage{amsthm}
\usepackage{graphicx}
\usepackage{mathrsfs}
\usepackage{color}
\usepackage{hyperref}
\usepackage{url}

\newtheorem{theorem}{Theorem}[section]
\newtheorem{proposition}[theorem]{Proposition}
\newtheorem{lemma}[theorem]{Lemma}
\newtheorem{corollary}[theorem]{Corollary}

\theoremstyle{definition}
\newtheorem{definition}[theorem]{Definition}
\newtheorem{example}[theorem]{Example}
\newtheorem{remark}[theorem]{Remark}

\newcommand{\bC}{\mathbb{C}}

\newcommand{\bS}{\mathbb{S}}
\newcommand{\cH}{\mathcal{H}}
\newcommand{\cD}{\mathcal{D}}
\newcommand{\cE}{\mathcal{E}}
\newcommand{\cP}{\mathcal{P}}
\newcommand{\cS}{\mathcal{S}}

\newcommand{\eP}{\mathsf{P}}
\newcommand{\ket}[1]{|#1\rangle}
\newcommand{\bra}[1]{\langle#1|}
\newcommand{\tr}{\mathrm{tr}}
\newcommand{\hist}{\boldsymbol{\Omega}}
\newcommand{\histb}{\boldsymbol{\Theta}}

\title{\textbf{Consistent histories and the ultrametric\\ on infinite tensor products of Hilbert spaces}}
\author{\textbf{Andrew Lesniewski}\\
Department of Mathematics\\
Baruch College\\
One Bernard Baruch Way\\
New York, NY 10010\\
USA}
\date{\today}

\begin{document}
\maketitle

\begin{abstract}
The consistent histories formulation of quantum mechanics assigns classical probabilities to sequences of coarse-grained events when the corresponding interference terms vanish. In finite models these terms generically need not vanish exactly, and their smallness alone supplies no asymptotic geometry on the space of histories. We construct such a geometry for histories that accumulate quantum records. When the normalized Dowker - Halliwell interference between two histories factorizes as the cumulative fidelity of their records, its decay is controlled by the accumulated squared Bures distance. In the infinite registration limit, the Kakutani - von Neumann dichotomy gives a sharp sector alternative: summable Bures defect yields weakly equivalent record sequences, whereas nonsummable defect produces orthogonal
superselection sectors and vanishing normalized interference. The convergence exponent of the Bures defect defines a pseudo-ultrametric on histories. After quotienting its zero distance degeneracy, it becomes a complete ultrametric, requiring neither compactness of the outcome space nor continuity of the record family, and coincides with the gauge invariant sector metric $\tilde d$ introduced in \cite{LesTP}. Its value, the \emph{consistency rate}, has three equivalent interpretations: it is the sector separation metric, the polynomial growth exponent of the negative logarithm of normalized interference, and, by the Helstrom formula, the corresponding exponent of the negative logarithm of the minimum error probability for discriminating the cumulative records. Classical separation profiles determine the rate: positive long time mean defect gives the maximal value $\delta=1$, while the borderline profile $d_m^2\asymp m^{-1}$ produces distinct superselection sectors at zero polynomial rate, $\delta=0$. Finally, we relate this construction to the isolated many-body branching mechanism of \cite{PCRR26}: their finite dimensional permutation sector records obey the same accumulated fidelity structure, while the infinite tensor product provides an asymptotic idealization of continued record accumulation in which sector separation becomes exact.
\end{abstract}

\section{Introduction}\label{sec:intro}

Two recent developments, arrived at independently and from opposite directions, meet at the same mathematical structure: the accumulation of quantum records along a history.

The physical backdrop is decoherence \cite{S19}. A quantum system becomes entangled with degrees of freedom that retain records of macroscopically different alternatives, and the growing distinguishability of those records suppresses interference between the corresponding branches. Conventionally, the record bearing degrees of freedom are identified with an external environment. A recent construction of Pilatowsky-Cameo, Cotler, Ranard, and Riedel \cite{PCRR26} shows explicitly how the same mechanism can emerge within an isolated many-body system. In their weakly disordered chaotic kicked top, the collective spin sector $\cS$ carries the quasiclassical variable, while the permutation sector $\cP$ acts as an internal bath. A branch has the schematic form
\begin{equation}\label{eq:PCRR-branch}
\ket{\phi_{\hist}}=\ket{\Omega_M}_{\cS} \otimes \ket{R_{\hist}}_{\cP},
\end{equation}
and the microscopic records obey, in the appropriate semiclassical regime, a finite bath fidelity law of the form
\begin{equation}\label{eq:PCRR}
|\langle R_{\histb}|R_{\hist}\rangle|^2\sim\exp\!\big(-\Gamma D^{(2)}_{\hist,\histb}\big),
\end{equation}
where
\begin{equation}\label{eq:PCRR-D2}
D^{(2)}_{\hist,\histb}\approx \frac12\sum_{m=1}^{M}\big(d(\Omega_{m-1},\Theta_{m-1})^2+d(\Omega_m,\Theta_m)^2\big).
\end{equation}
Thus an accumulated squared distance between two quasiclassical histories controls the loss of record fidelity and, through it, the suppression of interference in collective observables.

From the structural direction, in \cite{LesTP} we equipped von Neumann's weak equivalence classes of infinite tensor product vectors \cite{vN39} with a gauge invariant convergence-exponent pseudo-ultrametric $\tilde d$. After quotienting its zero distance degeneracy, the resulting ultrametric space is complete. For product vectors, the elementary defect is
\begin{equation*}
1-|\langle r_m,r'_m\rangle|=\frac12\,d_B(r_m,r'_m)^2,
\end{equation*}
so the metric is the convergence exponent associated with an accumulated squared Bures separation.

The purpose of the present paper is to identify the history theoretic and operational content of this structure. Geometric formulations of consistency already exist, most notably Craig's treatment of a positive decoherence functional as a semi-inner product on history operators \cite{Cr97}. The geometry developed here is different. It is an \emph{asymptotic rate geometry on the record classes of histories}. The basic object is not the value of a decoherence functional at a fixed truncation, but the rate at which cumulative records separate as progressively more registration events are included.

The central result can be summarized schematically as
\begin{equation}\label{eq:three-readings-intro}
\boxed{
\begin{aligned}
\text{sector separation rate}
&=\text{normalized decoherence rate}\\
&=\text{optimal discrimination error rate}.
\end{aligned}
}
\end{equation}
The starting point is deliberately abstract. A history
\begin{equation*}
\hist=(\Omega_1,\Omega_2,\ldots)
\end{equation*}
is assigned, at truncation time $N$, a branch vector whose record component factorizes,
\begin{equation*}
\Psi_{\hist,N} = c_{\hist,N} \bigotimes_{m\le N}r_m(\hist).
\end{equation*}
The normalized Dowker - Halliwell ratio then cancels the branch amplitudes and becomes exactly the cumulative record fidelity. A repeated registration collision model provides a concrete realization, but the geometric results depend only on the sequence of per-step fidelities $F_m(\hist,\histb)$ or, equivalently, the defects
\begin{equation*}
b_m(\hist,\histb) = 1-F_m(\hist,\histb).
\end{equation*}
For spin coherent records these defects are, up to universal multiplicative constants,
\begin{equation*}
b_m \asymp \min\big(1\,,Sd(\Omega_m,\Theta_m)^2\big),
\end{equation*}
providing a direct transfer principle from classical trajectory separation to quantum history geometry.

Section~\ref{sec:bures} places this defect in its natural geometry and proves two sided estimates relating cumulative record fidelity to the accumulated squared Bures distance. In particular, away from perfect single step records, the negative logarithm of cumulative fidelity and the accumulated Bures defect have the same asymptotic scale.

Section~\ref{sec:dichotomy} applies the Kakutani - von Neumann infinite product dichotomy. If
\begin{equation*}
\sum_m b_m<\infty,
\end{equation*}
the two record sequences are weakly equivalent. If
\begin{equation*}
\sum_m b_m=\infty,
\end{equation*}
their infinite product sectors are orthogonal, the corresponding product states are disjoint, and the normalized interference vanishes exactly in the infinite registration limit. A single perfectly orthogonal record can also annihilate interference without producing sector separation; thus exact consistency and sector separation differ on this degenerate stratum. At the level of families, the summability relation defines merger classes, and the quasi-local record representation becomes block diagonal across distinct classes.

Section~\ref{sec:ultrametric} refines this binary alternative by defining the \emph{consistency rate}
\begin{equation*}
\delta(\hist,\histb)=\cE\big(\big\{\frac12d_B\bigl(r_m(\hist),r_m(\histb)\bigr)^2\big\}_{m\ge1}\big),
\end{equation*}
where, for a nonnegative sequence $a=(a_1,a_2,\ldots)$,
\begin{equation}
\cE(a)=\inf\big\{p\ge0:\sum_{m\ge1}a_m\,m^{-p}<\infty\big\}.
\end{equation}
The function $\delta$ is a pseudo-ultrametric on history space and descends to a pseudo-ultrametric on the merger quotient. After quotienting by its full zero distance relation, it becomes a complete ultrametric. The completeness theorem is proved directly by a block pasting construction and requires neither compactness of the outcome space nor continuity of the record family. The same rate geometry also generates a canonical nested filtration of history families by ultrametric balls.

The zero distance relation is strictly coarser than merger equivalence. In particular, there is a \emph{marginal stratum}
\begin{equation*}
\sum_m b_m=\infty,\;\text{ and }\;\delta(\hist,\histb)=0,
\end{equation*}
in which the histories lie in distinct superselection sectors even though their separation has zero polynomial rate. Thus the ultrametric refines the sector dichotomy without reducing to it.

Section~\ref{sec:rate} gives the rate its direct physical interpretation. If the per-step fidelities remain bounded away from zero, then
\begin{equation*}
\delta(\hist,\histb)=\limsup_{N\to\infty}\,\frac{\log^+\bigl(-\log|\mathfrak d_N(\hist,\histb)|\bigr)}{\log N}.
\end{equation*}
For a decohering pair, the exact Helstrom formula for two pure cumulative records further gives
\begin{equation*}
\delta(\hist,\histb)=\limsup_{N\to\infty}\,\frac{\log\bigl(-\log \eP_{\mathrm{err}}^{(N)}\bigr)}{\log N}\,,
\end{equation*}
where $\eP_{\mathrm{err}}^{(N)}$ is the minimum binary discrimination error. Under an additional regularity hypothesis,
\begin{equation*}
|\mathfrak d_N(\hist,\histb)|
=
\exp\bigl(-N^{\delta+o(1)}\bigr).
\end{equation*}
The marginal stratum $\delta=0$ includes slower decay laws, such as polynomial suppression of interference.

Section~\ref{sec:lyapunov} translates classical separation profiles into consistency rate strata. For fixed spin $S$, a power-law profile
\begin{equation*}
d_m^2\asymp m^{-s}
\end{equation*}
gives merger for $s>1$, marginal sector separation with $\delta=0$ at
$s=1$, and
\begin{equation*}
\delta=1-s
\end{equation*}
for $0\le s<1$. A positive long-time mean defect gives the maximal value $\delta=1$. We deliberately separate these metric statements from the model dependent dynamical question of which separation profiles are actually realized by a given autonomous quantum system.

Finally, Section~\ref{sec:pcrr} returns to \cite{PCRR26}. Their permutation sector records are not the spin coherent toy records used here as a geometric example: the coherent labels belong to the collective sector $\cS$, whereas the records themselves live in the permutation sector $\cP$. What is shared is the accumulated fidelity structure. Their finite dimensional record fidelity law provides a concrete finite size counterpart of the product mechanism that the infinite registration model makes exact. We also examine the capacity of a finite dimensional record space. A Welch bound \cite{W74} constrains the number of records in the controlled near orthogonal regime, whereas a random packing construction shows that at a fixed nonzero fidelity threshold finite dimension alone permits exponentially many record states. This is a dimensional non-obstruction, not a claim that the dynamics of \cite{PCRR26} realizes an optimal packing.

We emphasize the division of labor. The dynamical mechanism of \cite{PCRR26}, chaos together with weak symmetry breaking disorder producing an internal bath, is not derived here. Nor is the infinite tensor product asserted to be a literal large system limit of their permutation sector. It instead supplies an asymptotic idealization of continued record accumulation: approximate record separation is promoted to a sharp sector alternative, while the rate of that separation becomes
a complete ultrametric geometry with a direct operational meaning.

\subsection*{Disclosure on the use of AI}
In the course of preparing this manuscript, the author used Claude (Anthropic) and ChatGPT (OpenAI) as research assistants, for verification of proofs and calculations, checking of references, and editorial suggestions on the exposition. All mathematical results and their proofs are due to the author, who has reviewed and verified all AI-assisted material and takes full responsibility for the content of this paper.

\section{The repeated branching model}\label{sec:model}

Let $(X,d)$ be a compact metric space of \emph{outcomes}; in the examples below it is either a finite pointer set or the sphere $\bS^2$ of spin coherent labels of \cite{PCRR26}. A \emph{history} is a sequence $\hist=(\Omega_1,\Omega_2,\dots)\in X^{\infty}$ of outcomes, one per time step.

Each step of the branching process leaves a trace in a fresh \emph{register}, a copy of a fixed Hilbert space $\cH_{\mathrm{reg}}$, indexed by the time step at which it is written. The trace is specified by a \emph{record family}, namely a family of unit vectors $r(\omega)\in\cH_{\mathrm{reg}}$, $\omega\in X$, depending continuously on the outcome. Its basic invariant is the \emph{two-point fidelity}
\begin{equation}\label{eq:two-point}
F(\omega,\omega')\;=\;\bigl|\langle r(\omega),r(\omega')\rangle\bigr|.
\end{equation}
We assume throughout that the family is \emph{nondegenerate}: $F(\omega,\omega)=1$, while $F(\omega,\omega')<1$ for $\omega\neq\omega'$: distinct outcomes leave distinguishable, but in general not orthogonal, traces. A history then accumulates the product record
\begin{equation}\label{eq:record-map}
R(\hist)\;=\;\bigotimes_{m\ge1} r(\Omega_m)\;\in\;\bigotimes_{m\ge1}\cH_{\mathrm{reg}},
\end{equation}
a vector in the complete infinite tensor product of the registers \cite{vN39}. 

For a pair of histories $\hist,\histb$ we write
\begin{equation}\label{eq:per-step}
\begin{split}
F_m(\hist,\histb)&=F(\Omega_m,\Theta_m),\\
b_m(\hist,\histb)&=1-F_m(\hist,\histb),
\end{split}
\end{equation}
for the per-step record fidelity and the per-step \emph{defect}. The sequence $\{b_m\}$ is the only data that the results of this paper use; the sector structure of the infinite tensor product \cite{vN39,LesTP} will decide, in Section~\ref{sec:dichotomy}, when the cumulative records of two histories are macroscopically distinct.

\begin{example}[Record families]\label{ex:records}
(i) \emph{Pointer records} \cite{Z03}. Let $X$ be a finite set of outcomes, and associate with each $\omega\in X$ a unit record state
$r(\omega)\in\cH_{\mathrm{reg}}$, with
\begin{equation*}
F(\omega,\omega')=
\bigl|\langle r(\omega),r(\omega')\rangle\bigr|<1,\qquad \omega\neq\omega'.
\end{equation*}
Thus distinct outcomes leave distinct, though not necessarily orthogonal, pointer records. The simplest example is a binary outcome space $X=\{\uparrow,\downarrow\}$, for which a single qubit can serve as the register. If the two record states are separated by Bloch sphere angle
$2\gamma$, then
\begin{equation*}
F(\uparrow,\downarrow)=|\cos\gamma|.
\end{equation*}

\noindent
(ii) \emph{Spin coherent records}. Let $X=\bS^2$ with the geodesic (great circle) metric: writing $\hat n_\omega\in\bS^2$ for the unit vector with polar coordinates $\omega=(\theta,\varphi)$,
\begin{equation}\label{eq:geodesic}
\begin{split}
d(\omega,\omega')&=\arccos\bigl(\hat n_\omega\cdot\hat n_{\omega'}\bigr)\\
&=\arccos\bigl(\cos\theta\cos\theta'+\sin\theta\sin\theta'\cos(\varphi-\varphi')\bigr)\in[0,\pi].
\end{split}
\end{equation}
Equivalently, $\|\hat n_\omega-\hat n_{\omega'}\|=2\sin\bigl(d(\omega,\omega')/2\bigr)$. Let $\cH_{\mathrm{reg}}=\bC^{2S+1}$, and let $r(\omega)=\ket{\omega}$ be the spin $S$ coherent state. Since $\ket{\omega}$ is the $2S$-fold symmetric tensor power of the spin $\frac12$ coherent state
\begin{equation*}
\ket{\omega}_{1/2}=\cos\frac\theta2\,\ket{0}+e^{i\varphi}\sin\frac\theta2\,\ket{1},
\end{equation*}
the overlaps satisfy $\langle\omega,\omega'\rangle=\bigl(\langle\omega,\omega'\rangle_{1/2}\bigr)^{2S}$, where
\begin{equation*}
\langle\omega,\omega'\rangle_{1/2}=\cos\frac\theta2\,\cos\frac{\theta'}2+e^{i(\varphi'-\varphi)}\sin\frac\theta2\,\sin\frac{\theta'}2\,,
\end{equation*}
with
\begin{equation*}
\begin{split}
\bigl|\langle\omega,\omega'\rangle_{1/2}\bigr|^2&=\frac12\bigl(1+\hat n_\omega\cdot\hat n_{\omega'}\bigr)\\
&=\cos^2\bigl(d(\omega,\omega')/2\bigr),
\end{split}
\end{equation*}
by a direct expansion. Hence
\begin{equation*}
F(\omega,\omega')=\cos^{2S}\bigl(d(\omega,\omega')/2\bigr).
\end{equation*}
In particular, the family is nondegenerate, $F(\omega,\omega')<1$ for $\omega\neq\omega'$, and $F(\omega,\omega')=0$ precisely at antipodal pairs, $d(\omega,\omega')=\pi$.

The following elementary estimate will be the bridge between classical separation of coherent-state labels and quantum separation of their records.

\begin{lemma}[Cosine-power estimate]\label{lem:cosine-power}
For $p\ge1$ and $|x|\le\pi/2$,
\begin{equation}\label{eq:cospx}
(1-e^{-1})\min\Big(1,\;\frac{px^2}{2}\Big)\le1-(\cos x)^p\le\min\Big(1,\;\frac{px^2}{2}\Big).
\end{equation}
Both constants are optimal.
\end{lemma}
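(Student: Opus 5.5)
The plan is to reduce both inequalities to scalar inequalities in $u=px^2/2$ by comparing $\cos x$ with $1-x^2/2$ from below and with $e^{-x^2/2}$ from above. Throughout, $|x|\le\pi/2$ gives $\cos x\in[0,1]$, so $(\cos x)^p\in[0,1]$ and $1-(\cos x)^p\in[0,1]$. By evenness we may take $x\ge0$.

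\emph{Upper bound.} The bound $1-(\cos x)^p\le1$ holds because $(\cos x)^p\ge0$. For the other branch we may assume $u=px^2/2\le1$, since otherwise $\min(1,u)=1$ and we are done. Then $t=x^2/2\le u\le1$, because $p\ge1$. The elementary inequality $\cos x\ge1-x^2/2$ gives $(\cos x)^p\ge(1-t)^p$, using $1-t\ge0$ so that raising to the $p$-th power preserves the inequality. Bernoulli's inequality $(1-t)^p\ge1-pt$ for $t\in[0,1]$ and $p\ge1$ then yields $(\cos x)^p\ge1-u$, which is the upper bound.

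\emph{Lower bound.} The key step, and the one I expect to need the most care, is
\[
\cos x\le e^{-x^2/2},\qquad 0\le x<\pi/2 .
\]
I would prove it by setting $g(x)=\log\cos x+x^2/2$. Then $g(0)=0$ and $g'(x)=x-\tan x\le0$, since $\tan x\ge x$ on $[0,\pi/2)$. The endpoint $x=\pi/2$ is trivial because $\cos(\pi/2)=0$. Raising to the power $p$ gives $1-(\cos x)^p\ge1-e^{-u}$.

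It remains to show $1-e^{-u}\ge(1-e^{-1})\min(1,u)$ for $u\ge0$. For $u\ge1$ this follows because $1-e^{-u}$ is increasing, so $1-e^{-u}\ge1-e^{-1}$. For $u\in[0,1]$ it follows because $1-e^{-u}$ is concave and therefore lies above its chord from $(0,0)$ to $(1,1-e^{-1})$.

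\emph{Optimality.} For the upper constant, fix $p$ and let $x\to0$. Then $1-(\cos x)^p=px^2/2+O(x^4)$, so the ratio of $1-(\cos x)^p$ to $\min(1,px^2/2)$ tends to $1$, and no constant smaller than $1$ works.

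For the lower constant, take $x=\sqrt{2/p}$ with $p\ge1$; this satisfies $x\le\sqrt2<\pi/2$ and gives $u=1$. As $p\to\infty$,
\[
(\cos x)^p=\exp\bigl(p\log\cos\sqrt{2/p}\bigr)=\exp\bigl(-1+O(1/p)\bigr)\to e^{-1}.
\]
Hence $1-(\cos x)^p\to1-e^{-1}$ while $\min(1,u)=1$, so no constant larger than $1-e^{-1}$ works.
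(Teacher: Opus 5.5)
Your proof is correct and follows essentially the same route as the paper's: a Bernoulli/convexity step for the upper bound, the comparison $\cos x\le e^{-x^2/2}$ via $\log\cos x+x^2/2$ for the lower bound, and the same two limits ($x\to0$ at fixed $p$, and $px^2/2=1$ with $p\to\infty$) for optimality. The differences are cosmetic. You apply Bernoulli to $1-x^2/2$, which forces your reduction to $u\le1$, whereas the paper applies convexity directly at $t=\cos x$ and needs no case split. You use the concavity chord where the paper uses monotonicity of $(1-e^{-v})/v$, and you also treat the endpoint $x=\pi/2$ explicitly.
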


The proof is elementary and is given in Appendix~\ref{app:cosine-power}. Applying Lemma~\ref{lem:cosine-power} with
\begin{equation*}
\begin{split}
x&=\frac{d(\omega,\omega')}{2},\\
p&=2S,
\end{split}
\end{equation*}
shows in particular that
\begin{equation}\label{eq:coherent-defect}
1-F(\omega,\omega')
\asymp
\min\big(1,\,Sd(\omega,\omega')^2\big).
\end{equation}
More precisely, one has the sharp global estimate
\begin{equation}\label{eq:coherent-defect-sharp}
(1-e^{-1/4})\min\big(1,\,Sd(\omega,\omega')^2\big)\le 1-F(\omega,\omega')\le\min\big(1,\,Sd(\omega,\omega')^2\big).
\end{equation}
Indeed, putting $a=S\,d(\omega,\omega')^2$ and using
$\cos x\le e^{-x^2/2}$ for $|x|\le\pi/2$ (established in Appendix~\ref{app:cosine-power}) gives
\begin{equation*}
1-F(\omega,\omega')\ge 1-e^{-a/4}.
\end{equation*}
For $0<a\le1$, the function $(1-e^{-a/4})/a$ is decreasing, while for $a\ge1$ the function $1-e^{-a/4}$ is increasing in $a$; hence
\begin{equation*}
1-F(\omega,\omega') \ge (1-e^{-1/4})\min(1,\,a).
\end{equation*}
The upper bound follows from
\begin{equation*}
1-\cos^{2S}(d/2)\le \frac{S d^2}{4}\le S d^2
\end{equation*}
together with the trivial bound by $1$. The lower constant is sharp along $S\to\infty$, $d=S^{-1/2}$: then $a=1$, while $F=\cos^{2S}\bigl(\tfrac12 S^{-1/2}\bigr)\to e^{-1/4}$. The upper constant is sharp at antipodal
pairs, where $1-F=1=\min(1,\,Sd^2)$.

Equation~\eqref{eq:coherent-defect} is the transfer principle from classical trajectory separation to quantum history geometry: up to fixed multiplicative constants, it identifies the per-step record defect with the truncated squared distance between the corresponding classical labels. It will therefore allow the asymptotic separation profiles considered below to be translated directly into consistency rate strata.

Finally, the small-angle behavior follows from
$\log\cos u=-u^2/2+O(u^4)$:
\begin{equation}\label{eq:coherent-small-angle}
F(\omega,\omega')=\exp\Big(-\frac{S\,d(\omega,\omega')^2}{4}\big(1+O(d(\omega,\omega')^2)\big)\Big),
\end{equation}
as $d(\omega,\omega')\to0$, where the $O(d^2)$ term inside the exponent is uniform in $S$.

This spin coherent family is a convenient geometric toy model. It should not be identified literally with the records of \cite{PCRR26}: there the spin coherent state belongs to the collective sector $\cS$, whereas the microscopic record belongs to the permutation sector $\cP$. The connection is instead through the accumulated record fidelity law discussed in Remark~\ref{rem:pcrr-match} and Section~\ref{sec:pcrr}.
\end{example}

Everything in this paper rests on a single structural assumption about branches, which we now isolate.

\begin{definition}[Branch structure]\label{def:branch}
A \emph{branch structure} over a family of histories assigns to each history $\hist$ and each $N\ge1$ a \emph{branch vector}
\begin{equation}\label{eq:branch-vec}
\begin{split}
\Psi_{\hist,N}&=c_{\hist,N}\,R_N(\hist),\\
R_N(\hist)&=\bigotimes_{m\le N} r(\Omega_m),
\end{split}
\end{equation}
with nonzero amplitudes $c_{\hist,N}\in\bC$: the record part factorizes over the time steps. The \emph{decoherence functional} of a pair of histories, truncated at $N$ steps, is the Gell-Mann - Hartle bilinear form \cite{GMH90}
\begin{equation}\label{eq:dec-functional}
\cD_N(\hist,\histb)\;=\;\bigl\langle\Psi_{\histb,N},\,\Psi_{\hist,N}\bigr\rangle,
\end{equation}
so that, by \eqref{eq:branch-vec},
\begin{equation}\label{eq:offdiag}
\cD_N(\hist,\histb)\;=\;\bar c_{\histb,N}\,c_{\hist,N}\prod_{m\le N} \bigl\langle r(\Theta_m),r(\Omega_m)\bigr\rangle .
\end{equation}
\end{definition}

The raw off-diagonal \eqref{eq:offdiag} is not the right object for grading consistency: it can become small simply because the norms of the individual branch vectors become small. This occurs, for example, in product branching models whose path weights decay exponentially with the number of branchings. Raw smallness therefore conflates suppression of interference with smallness of the branches themselves. The remedy is the normalized consistency criterion of Dowker and Halliwell \cite{DH92},
\begin{equation}\label{eq:DH}
\begin{split}
\mathfrak d_N(\hist,\histb)&=\frac{\cD_N(\hist,\histb)}{\sqrt{\cD_N(\hist,\hist)\,\cD_N(\histb,\histb)}}\,,\\
\bigl|\mathfrak d_N(\hist,\histb)\bigr|&=\;\prod_{m\le N}F_m(\hist,\histb),
\end{split}
\end{equation}
in which the amplitudes cancel exactly. The Dowker - Halliwell ratio of a pair of histories \emph{is} the cumulative fidelity of their records. Every result of this paper depends on a branch structure only through the per-step defect sequence $\{b_m\}$; all limit and rate statements below concern \eqref{eq:DH}.

It remains to show that branch structures of this form can arise dynamically. The canonical realization is a collision model with one fresh register per time step, so that records never crowd. This is the idealization in which the finite internal bath $\cP$ of \cite{PCRR26} is replaced by an indefinitely extendible sequence of record degrees of freedom.

A word on the formalism is useful. The class operators below are not chains of orthogonal projections on a fixed Hilbert space, as in the projective consistent histories framework \cite{Gri84,Omn92,GMH90}, but Kraus components of a repeated quantum instrument acting between growing register spaces. The construction is therefore a \emph{generalized} histories model in the sense of Isham \cite{I94}. The use of generalized histories is essential here: if every outcome were retained in an orthogonal persistent register, distinct histories would be orthogonal from the outset and the consistency problem would become trivial; see Remark~\ref{rem:trap}.

\begin{definition}[Repeated registration]\label{def:iterated}
Let $X$ be a finite alphabet. At this abstract level the transient system degrees of freedom have been integrated out: the step operators retain only the branch amplitude and the record
deposited in the new register. For each step $m$ fix amplitudes $a^{(m)}\colon X\to\bC$ with $\sum_{\omega}|a^{(m)}_\omega|^2=1$, and define the step-$m$ \emph{branch operator} 
\begin{equation*}
A^{(m)}_\omega\colon\bigotimes_{n<m}\cH_{\mathrm{reg}}\to\bigotimes_{n\le m}\cH_{\mathrm{reg}},
\end{equation*}
by
\begin{equation}\label{eq:branch-op}
A^{(m)}_\omega\,\Psi\;=\;a^{(m)}_\omega\,\Psi\otimes r(\omega),
\end{equation}
which satisfies the per-step completeness relation
\begin{equation}
\sum_\omega A^{(m)\dagger}_\omega A^{(m)}_\omega=1.
\end{equation}
The \emph{class operator} of a history truncated at $N$ steps is the chain
\begin{equation}
C_{\hist,N}=A^{(N)}_{\Omega_N}\cdots A^{(1)}_{\Omega_1},
\end{equation}
so that, applied to the empty register line (the unit $1$ of the empty tensor product),
\begin{equation}\label{eq:class-branch}
\begin{split}
C_{\hist,N}\,1&=c_{\hist,N}\;r(\Omega_1)\otimes\cdots\otimes r(\Omega_N),\\
c_{\hist,N}&=\prod_{m\le N}a^{(m)}_{\Omega_m}.
\end{split}
\end{equation}
The weights $\mu_N(\hist)=|c_{\hist,N}|^2$ form a product probability measure on $X^N$, the classical path measure of the branching process.
\end{definition}

\begin{lemma}\label{lem:offdiag}
Applied to the empty register line, the chains of Definition~\ref{def:iterated} realize the branch structure of Definition~\ref{def:branch}: $C_{\hist,N}\,1=\Psi_{\hist,N}$ with
$c_{\hist,N}=\prod_{m\le N}a^{(m)}_{\Omega_m}$, and
$\cD_N(\hist,\histb)=\langle C_{\histb,N}1,\,C_{\hist,N}1\rangle$.
\end{lemma}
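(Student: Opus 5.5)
The plan is a direct induction on $N$, using only the definition \eqref{eq:branch-op} of the step operators. Take the empty tensor product $\bigotimes_{n<1}\cH_{\mathrm{reg}}=\bC$ with unit vector $1$, and the canonical identification $\Psi\otimes v$ of $\bigotimes_{n<m}\cH_{\mathrm{reg}}\otimes\cH_{\mathrm{reg}}$ with $\bigotimes_{n\le m}\cH_{\mathrm{reg}}$. The claim to prove by induction is
\begin{equation*}
A^{(N)}_{\Omega_N}\cdots A^{(1)}_{\Omega_1}1=\Big(\prod_{m\le N}a^{(m)}_{\Omega_m}\Big)\,r(\Omega_1)\otimes\cdots\otimes r(\Omega_N).
\end{equation*}
For $N=1$ this is \eqref{eq:branch-op} with $\Psi=1$, since $1\otimes r(\Omega_1)=r(\Omega_1)$. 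For the inductive step, apply $A^{(N+1)}_{\Omega_{N+1}}$ to the right-hand side. Linearity in $\Psi$ pulls the scalar $\prod_{m\le N}a^{(m)}_{\Omega_m}$ out and multiplies it by $a^{(N+1)}_{\Omega_{N+1}}$, and the tensor factor $r(\Omega_{N+1})$ is appended. This gives \eqref{eq:class-branch}, and hence $C_{\hist,N}1=c_{\hist,N}R_N(\hist)=\Psi_{\hist,N}$, which has the form \eqref{eq:branch-vec}.

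Definition~\ref{def:branch} also requires the amplitudes $c_{\hist,N}$ to be nonzero. This holds if every $a^{(m)}_\omega\neq0$. Otherwise one restricts the family of histories to those of positive path weight $\mu_N(\hist)>0$, for every $N$, so that the hypothesis of Definition~\ref{def:branch} is met on the relevant family. I would state this restriction explicitly; it is the only point needing care, and the main obstacle is really just bookkeeping.

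The identity for $\cD_N$ then follows by substituting $\Psi_{\hist,N}=C_{\hist,N}1$ into \eqref{eq:dec-functional}. Evaluating the inner product with \eqref{eq:offdiag} requires that the inner product on the finite tensor product factorize over elementary tensors, $\langle\otimes_m u_m,\otimes_m v_m\rangle=\prod_m\langle u_m,v_m\rangle$, together with conjugate-linearity in the first slot for $\bar c_{\histb,N}$. In passing, I would also verify the completeness relation. Since $A^{(m)\dagger}_\omega A^{(m)}_\omega\Psi=|a^{(m)}_\omega|^2\|r(\omega)\|^2\Psi$ and $\|r(\omega)\|=1$, summing over $\omega$ gives the identity. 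This confirms that the $A^{(m)}_\omega$ form a genuine instrument, although the lemma itself does not need it.
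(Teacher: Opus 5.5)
Your proof is correct and follows the paper's route: the paper simply calls the lemma immediate from the chain formula \eqref{eq:class-branch} and the factorization of inner products over the register tensor factors, and your induction on $N$ followed by substitution into \eqref{eq:dec-functional} just writes that computation out. Your caveat about vanishing amplitudes $a^{(m)}_\omega$ is a legitimate refinement the paper passes over: Definition~\ref{def:branch} requires $c_{\hist,N}\neq0$, so one must either assume all amplitudes are nonzero or restrict to histories of positive path weight.
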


\begin{proof}
Immediate from \eqref{eq:class-branch} and the product structure of the registers.
\end{proof}

\begin{example}[One branching step as a controlled unitary]\label{ex:minimal}
A single step of Definition~\ref{def:iterated} is implemented unitarily as follows. Take $X=\{\uparrow,\downarrow\}$, and couple a transient \emph{probe} qubit
$\cH_{\mathrm{probe}}=\bC^2$, prepared in the superposition
\begin{equation*}
\chi^{(m)}=a^{(m)}_\uparrow\ket\uparrow+a^{(m)}_\downarrow\ket\downarrow,
\end{equation*} to the fresh register $m$, prepared in a ready state $\varphi\in\cH_{\mathrm{reg}}$, through the controlled unitary 
\begin{equation}\label{branchV}
V^{(m)}\;=\;P_\uparrow\otimes W_\uparrow\,+\,P_\downarrow\otimes W_\downarrow.
\end{equation}
Here $P_\uparrow$, $P_\downarrow$ are the projections onto the probe basis and $W_\uparrow$, $W_\downarrow$ are unitaries of $\cH_{\mathrm{reg}}$ preparing the records,
$r(\omega)=W_\omega\varphi$. One verifies immediately that $V^{(m)}$ is unitary, and
\begin{equation}\label{branchEq}
V^{(m)}\bigl(\chi^{(m)}\otimes\varphi\bigr)
\;=\;a^{(m)}_\uparrow\,\ket\uparrow\otimes r(\uparrow)
\,+\,a^{(m)}_\downarrow\,\ket\downarrow\otimes r(\downarrow).
\end{equation}
Thus the wavefunction branches, with each branch depositing its corresponding record. The branch operators \eqref{eq:branch-op} are the probe matrix elements of this event,
\begin{equation}\label{eq:probe-element}
\begin{split}
A^{(m)}_\omega\,\Psi
&=\Psi\otimes\bigl(\bra{\omega}\,V^{(m)}\,\ket{\chi^{(m)}}\bigr)\varphi\\
&=a^{(m)}_\omega\,\Psi\otimes r(\omega),
\end{split}
\end{equation}
and so Definition~\ref{def:iterated} is Example~\ref{ex:minimal} iterated with a fresh probe and a fresh register at every step, the probes carrying the amplitudes.

The one shot branching event of \cite{LesTP}, Section~6.2,  is the variant in which a single persistent probe controls product unitaries $\mathbf U=\bigotimes_j U_j$ and $\mathbf U'=\bigotimes_j U'_j$ acting on the \emph{entire} register line at once. The two branches are then the product records $(U_j\varphi_j)_j$ and $(U'_j\varphi_j)_j$, and whether they generate the same incomplete tensor product (one world) or orthogonal ones (two worlds) is decided by a single summability condition, the prototype of the dichotomy of Section~\ref{sec:dichotomy}. Definition~\ref{def:iterated} serializes that event: the writing is localized to one register per step, and time, rather than the site index, enumerates the environment.
\end{example}

\begin{remark}[Why no orthogonal outcome register is retained]\label{rem:trap} The probes of Example~\ref{ex:minimal} are transient: the class operators sum them out, and no register stores the outcome orthogonally. This is forced, not a convenience. If each outcome were additionally stored in an orthonormal register that persists  (iterate Example~\ref{ex:minimal} keeping every probe), then distinct histories would be exactly orthogonal through their probes alone, the decoherence functional identically diagonal, and consistency trivial. The nontrivial regime, and the one matching \cite{PCRR26}, whose coherent state projectors are non-orthogonal, is the one in which the only trace a history leaves is its record, with per-step overlap $F<1$.
\end{remark}

\begin{remark}[Windowed records]\label{rem:window}
Nothing below requires the special form $r_m=r(\Omega_m)$, only the defect sequence $\{b_m\}$ enters. In particular, the record at step $m$ may depend on a bounded window of the history, for example $r_m=r(\Omega_{m-1},\Omega_m)$. This is the form naturally suggested by the accumulated functional in \cite{PCRR26}, which contains both endpoints of each time interval. All pairwise sector, rate, and discrimination results remain unchanged after replacing $b_m$ by the defect of the corresponding windowed records. The only additional issue is completeness under block pasting, handled in Proposition~\ref{prop:window-complete}.

The collision model realization of Definition~\ref{def:iterated} is stated for a finite alphabet so that the instrument normalization is an ordinary sum. The abstract branch structure formulation, and hence all subsequent geometric results, does not require a finite outcome set. Thus the spin coherent example on $\bS^2$ is used at the level of record geometry without introducing a coherent state resolution of the identity into the collision model.
\end{remark}

\section{Records and the accumulated Bures functional}\label{sec:bures}

Throughout this section and the next, $\hist,\histb$ is a fixed pair of histories, with per-step fidelities $F_m=F(\Omega_m,\Theta_m)$ and defects $b_m=1-F_m$ as in
\eqref{eq:per-step}.

We first place the defect in its natural geometry. For unit vectors, the two-point fidelity \eqref{eq:two-point} is the fidelity of the corresponding
pure states. For pure states Uhlmann's fidelity \cite{U76,J94} 
\begin{equation*}
F(\rho,\sigma)=\tr\bigl|\sqrt{\rho}\sqrt{\sigma}\bigr|
\end{equation*}
reduces to $|\langle r,r'\rangle|$. With this convention, the associated Bures distance \cite{Bu69} satisfies
\begin{equation}\label{eq:bures-def}
d_B(r,r')^2=2\bigl(1-F(r,r')\bigr).
\end{equation}
Thus the per-step defect is exactly half the squared Bures distance,
\begin{equation*}
b_m = \frac12\,d_B\bigl(r(\Omega_m),r(\Theta_m)\bigr)^2.
\end{equation*}
This is not an additional approximation: the defect entering the infinite product criterion below is precisely the squared Bures separation of the corresponding pure records, up to the fixed factor $1/2$. Accordingly, the convergence exponents studied here are Bures divergence exponents in the sense of Equations~(52) - (53) of \cite{LesTP}.

The cumulative record fidelity, by \eqref{eq:DH}, the modulus of the Dowker - Halliwell ratio, is controlled two-sidedly by the accumulated defect.

\begin{proposition}[Master estimate]\label{prop:master}
Let
\begin{equation*}
\begin{split}
\Sigma_N&=\sum_{m\le N}b_m,\\
\beta_N&=\max_{m\le N}b_m.
\end{split}
\end{equation*}
Then:
\begin{itemize}
\item[(i)] Unconditionally,
\begin{equation*}
\prod_{m\le N}F_m\le e^{-\Sigma_N}.
\end{equation*}

\item[(ii)] If $F_m\ge c>0$ for all $m\le N$, then
\begin{equation*}
\prod_{m\le N}F_m \ge e^{-\kappa(c)\Sigma_N},
\end{equation*}
where
\begin{equation*}
\kappa(c)=\frac{-\log c}{1-c}.
\end{equation*}

\item[(iii)] If $\beta_N<1$, then
\begin{equation}\label{eq:log-identity}
\begin{split}
\Sigma_N &\le -\log\prod_{m\le N}F_m\\
&\le \frac{\Sigma_N}{1-\beta_N}\,.
\end{split}
\end{equation}
\end{itemize}

In particular, if $\beta_N$ is small, then
\begin{equation}\label{eq:bures-accumulated}
-\log\prod_{m\le N}F_m =\frac12\sum_{m\le N} d_B\bigl(r(\Omega_m),r(\Theta_m)\bigr)^2 \bigl(1+O(\beta_N)\bigr).
\end{equation}
Thus, in the small defect regime, the negative logarithm of cumulative record fidelity is, to relative error $O(\beta_N)$, the accumulated squared Bures distance of the per-step records.
\end{proposition}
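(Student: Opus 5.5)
The plan is to reduce everything to scalar inequalities for $-\log F$ as a function of the defect $b=1-F\in[0,1]$, applied termwise and then summed. Since $-\log\prod_{m\le N}F_m=\sum_{m\le N}(-\log(1-b_m))$, with the convention $-\log 0=+\infty$ whenever some $F_m=0$, each part of the proposition follows from a single-variable estimate on $g(b)=-\log(1-b)$ summed over $m\le N$.

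For (i), I will use the elementary inequality $1-b\le e^{-b}$, valid for all real $b$, which gives $F_m\le e^{-b_m}$ termwise. Multiplying these nonnegative factors yields $\prod_{m\le N}F_m\le e^{-\Sigma_N}$. This holds even when some $F_m=0$, so no hypothesis is needed.

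For (ii), the key observation is that $g$ is convex on $[0,1)$ with $g(0)=0$. Hence the chord slope $g(b)/b$ is nondecreasing in $b$. If $F_m\ge c$, then $b_m\le 1-c$, so
\[
g(b_m)\le \frac{g(1-c)}{1-c}\,b_m=\kappa(c)\,b_m .
\]
Summing over $m\le N$ and exponentiating gives the lower bound. The case $c=1$ is trivial, with all $b_m=0$ and $\kappa$ read as its limit $1$.

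For (iii), the lower bound is (i) rewritten as $\Sigma_N\le -\log\prod F_m$. For the upper bound I will use
\[
-\log(1-b)=\sum_{k\ge1}\frac{b^k}{k}\le \sum_{k\ge1}b^k=\frac{b}{1-b},
\]
valid for $0\le b<1$. Since $b_m\le\beta_N<1$, this gives $g(b_m)\le b_m/(1-\beta_N)$, and summing yields $\Sigma_N/(1-\beta_N)$. Alternatively, (iii) follows from (ii) with $c=1-\beta_N$ together with $\kappa(1-\beta)\le 1/(1-\beta)$, which is the same inequality.

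The asymptotic statement \eqref{eq:bures-accumulated} then follows by combining the two sides of (iii). Writing $1/(1-\beta_N)=1+O(\beta_N)$ for, say, $\beta_N\le 1/2$, we get $-\log\prod F_m=\Sigma_N(1+\epsilon_N)$ with $0\le\epsilon_N\le \beta_N/(1-\beta_N)$. Substituting $b_m=\tfrac12 d_B^2$ from \eqref{eq:bures-def} identifies $\Sigma_N$ with the accumulated squared Bures distance. Nothing here is a genuine obstacle. The only points needing care are the convexity and monotone-chord argument in (ii) and the handling of $F_m=0$ in (i).
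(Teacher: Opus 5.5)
Your proposal is correct and follows essentially the same route as the paper: you bound $-\log F_m$ termwise in terms of $b_m$ and sum over $m\le N$, with (ii) resting on the monotonicity of $(-\log t)/(1-t)$ and (iii) on $-\log(1-b)\le b/(1-b)$; your monotone-chord argument for the convex function $-\log(1-b)$ is the same fact as the paper's observation that this ratio is an average of the increasing function $(1-u)^{-1}$. The only difference is cosmetic: the paper derives all three scalar inequalities from the integral representation $-\log t=\int_t^1 ds/s$, whereas you use $1-b\le e^{-b}$, convexity, and the power series of $-\log(1-b)$.
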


\begin{proof}
All three statements follow from the integral representation
\begin{equation*}
-\log t=\int_t^1\frac{ds}{s}\,, \qquad t\in(0,1].
\end{equation*}

\noindent
(i) If some $F_m=0$, the assertion is immediate. Otherwise, $s^{-1}\ge1$ on $[t,1]$ gives
\begin{equation*}
-\log t\ge1-t.
\end{equation*}
Applying this to $t=F_m$ and summing over $m$ yields
\begin{equation*}
\begin{split}
-\log\prod_{m\le N}F_m&=\sum_{m\le N}(-\log F_m)\\
&\ge \sum_{m\le N}(1-F_m)\\
&=
\Sigma_N,
\end{split}
\end{equation*}
which is equivalent to the stated bound.

\noindent
(ii) Define
\begin{equation*}
\begin{split}
g(t)
&=\frac{-\log t}{1-t}\\
&=\frac{1}{1-t}
\int_0^{1-t}\frac{du}{1-u}.
\end{split}
\end{equation*}
This is the average of the increasing function $u\mapsto(1-u)^{-1}$ over the interval $[0,1-t]$, and is therefore decreasing in $t$. Hence, for $F_m\in[c,1]$,
\begin{equation*}
\begin{split}
-\log F_m &\le g(c)(1-F_m)\\
&=\kappa(c)b_m.
\end{split}
\end{equation*}
Summing over $m$ gives
\begin{equation*}
-\log\prod_{m\le N}F_m
\le
\kappa(c)\Sigma_N,
\end{equation*}
and exponentiation proves the claim.

\noindent
(iii) The lower bound is part (i). For the upper bound,
\begin{equation*}
\begin{split}
-\log(1-b_m)
&=\int_0^{b_m}\frac{du}{1-u}\\
&\le\frac{b_m}{1-b_m}\\
&\le\frac{b_m}{1-\beta_N}.
\end{split}
\end{equation*}
Summing over $m$ gives the upper bound in
\eqref{eq:log-identity}. The final assertion follows from
\begin{equation*}
\frac{1}{1-\beta_N}=1+O(\beta_N)
\end{equation*}
together with \eqref{eq:bures-def}.
\end{proof}

\begin{remark}[Correspondence with the record-fidelity law of \cite{PCRR26}]\label{rem:pcrr-match} 
The spin coherent family of Example~\ref{ex:records}(ii) illustrates how an accumulated squared classical distance can arise from an elementary product of quantum overlaps:
\begin{equation*}
b_m \asymp \min\big(1,\,Sd(\Omega_m,\Theta_m)^2\big).
\end{equation*}
The records in \cite{PCRR26}, however, are not these coherent states. Their coherent labels belong to the collective sector $\cS$, whereas the record vectors belong to the permutation sector $\cP$. What matches the present construction is the \emph{functional form} of their accumulated permutation record fidelity:
\begin{equation*}
|\langle R_{\histb}|R_{\hist}\rangle|^2 \sim \exp\bigl(-\Gamma D_{\hist,\histb}^{(2)}\bigr),
\end{equation*}
where
\begin{equation*}
D_{\hist,\histb}^{(2)}\approx\frac12\sum_{m=1}^{M}\big(d(\Omega_{m-1},\Theta_{m-1})^2+d(\Omega_m,\Theta_m)^2\big).
\end{equation*}
Since our fidelity convention uses the unsquared overlap, the corresponding relation is
\begin{equation*}
|\langle R_{\histb}|R_{\hist}\rangle|\sim \exp\bigl(-\frac12\,\Gamma D_{\hist,\histb}^{(2)}\bigr).
\end{equation*}
The factor $1/2$ is immaterial for all convergence and rate exponents considered below. Thus the finite bath of \cite{PCRR26} realizes the same accumulated fidelity principle that the fresh register model makes exact by construction. Matching the coefficient
\begin{equation*}
\Gamma=\frac{\gamma S^2\Delta t}{1+\gamma S\Delta t}
\end{equation*}
requires the semiclassical Lindblad analysis of \cite{PCRR26} and is not derived here.
\end{remark}

Proposition~\ref{prop:master} therefore reduces the asymptotic behavior of normalized interference to that of the additive functional $\Sigma_N$. The infinite product dichotomy and its rate refinement developed below are consequences of this reduction.

\section{Exact consistency and sector separation: the Kakutani - von Neumann dichotomy}\label{sec:dichotomy}

In the infinite registration limit, the gradual suppression of interference becomes an exact dichotomy. Two cumulative record sequences either remain in the same weak equivalence class or separate into disjoint infinite product sectors. In the absence of a perfect record, this alternative is equivalently expressed by survival or vanishing of the accumulated fidelity. Thus the approximate record separation observed at finite size in \cite{PCRR26} acquires an exact asymptotic counterpart.

The mechanism is the dichotomy governing infinite products - -of overlaps on the Hilbert space side and of measures on the probabilistic side - -which we now state formally. Throughout,
\begin{equation*}
\cH_{\mathrm{reg}}^{\otimes\infty}=\bigotimes_{m\ge1}\cH_{\mathrm{reg}}
\end{equation*}
denotes the complete infinite tensor product of the registers \cite{vN39}, and
\begin{equation*}
\mathfrak A = \bigotimes_{m\ge1}\mathcal B(\cH_{\mathrm{reg}})
\end{equation*}
denotes the quasi-local register algebra, i.e. the norm closure of the operators acting on finitely many registers.

\subsection{Kakutani - von Neumann dichotomy}
\label{subsec:KvN}

\begin{definition}[Equivalence of record sequences]\label{def:sectors}
Let $R=(r_m)_{m\ge1}$ and $R'=(r'_m)_{m\ge1}$ be sequences of unit vectors in $\cH_{\mathrm{reg}}$. They are
\begin{itemize}
\item[(a)] \emph{equivalent}, $R\sim R'$, if
\begin{equation*}
\sum_m\bigl|1-\langle r_m,r'_m\rangle\bigr|<\infty;
\end{equation*}
\item[(b)] \emph{weakly equivalent}, $R\sim_w R'$, if
\begin{equation*}
\sum_m\bigl(1-|\langle r_m,r'_m\rangle|\bigr)<\infty.
\end{equation*}
\end{itemize}
Thus weak equivalence is equivalence up to a choice of phase in each register, see \cite{LesTP}, Section 4. The \emph{incomplete tensor product} generated by $R$ is the closed subspace
\begin{equation*}
\cH_R = \overline{\mathrm{span}}\,\Bigl\{{\textstyle\bigotimes_m}r'_m:R'\sim R \Bigr\} \subset \cH_{\mathrm{reg}}^{\otimes\infty},
\end{equation*}
the \emph{sector} of $R$.
\end{definition}

\begin{theorem}[Kakutani - von Neumann dichotomy \cite{vN39,K48}]\label{thm:KvN}
For any two sequences $R,R'$ of unit vectors, exactly one of the following holds.
\begin{itemize}
\item[(i)] $R\sim_w R'$. Then
\begin{equation*}
\prod_m|\langle r_m,r'_m\rangle|
\end{equation*}
converges, with a strictly positive limit if and only if $\langle r_m,r'_m\rangle\neq0$ for every $m$. The phases in the individual registers can be chosen so that $R\sim R'$, in which case $\cH_R=\cH_{R'}$. The product states induced on the quasi-local algebra $\mathfrak A$ are quasi-equivalent \cite{P67,AW68}.

\item[(ii)] $R\not\sim_w R'$. Then
\begin{equation*}
\prod_m|\langle r_m,r'_m\rangle|=0,
\end{equation*}
the sectors are orthogonal, $\cH_R\perp\cH_{R'}$, for every choice of phases, and the induced product states are disjoint. Moreover, no quasi-local observable connects the sectors:
\begin{equation}\label{eq:superselection}
\bigl\langle\Phi,A\Psi\bigr\rangle=0
\qquad
\text{for all }
A\in\mathfrak A,\;
\Phi\in\cH_R,\;
\Psi\in\cH_{R'}.
\end{equation}
\end{itemize}
\end{theorem}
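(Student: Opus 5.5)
The plan is to reduce everything to the scalar sequence $t_m=|\langle r_m,r'_m\rangle|\in[0,1]$ and the defects $b_m=1-t_m$, and then use Proposition~\ref{prop:master} together with von Neumann's description of incomplete tensor products \cite{vN39}. Exactly one of the two cases occurs, since $\sum_m b_m$ either converges or diverges.

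\emph{Case (i).} The partial products $\prod_{m\le N}t_m$ are nonincreasing and bounded below by zero, so they converge. If some $t_m=0$ the limit is $0$. Suppose instead that all $t_m>0$. Since $\sum b_m<\infty$, we have $b_m\to0$, so $t_m\ge1/2$ for $m\ge m_0$. Proposition~\ref{prop:master}(ii) with $c=1/2$, applied to the tail, gives $\prod_{m\ge m_0}t_m\ge e^{-\kappa(1/2)\sum_{m\ge m_0}b_m}>0$. The finitely many remaining factors are positive, so the full product is positive.

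For the phases, I would set $\tilde r'_m=e^{i\theta_m}r'_m$ with $e^{i\theta_m}$ chosen so that $\langle r_m,\tilde r'_m\rangle=t_m\ge0$. Then $|1-\langle r_m,\tilde r'_m\rangle|=b_m$, which is summable, so $R\sim\tilde R'$. Equivalence is an equivalence relation: transitivity follows from $|1-ab|\le|1-a|+|1-b|$ for $|a|,|b|\le1$. Hence the two spans defining $\cH_R$ and $\cH_{\tilde R'}$ coincide. Quasi-equivalence of the product states $\omega_R$ and $\omega_{R'}$ follows because both are vector states in the single irreducible representation of $\mathfrak A$ on $\cH_R$, and vector states of one irreducible representation are quasi-equivalent. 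This is the Powers/Araki--Woods criterion \cite{P67,AW68}, which I would cite rather than reprove.

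\emph{Case (ii).} Proposition~\ref{prop:master}(i) gives $\prod_{m\le N}t_m\le e^{-\Sigma_N}\to0$. For orthogonality of sectors, take any $Q\sim R$ and $Q'\sim R'$. Then $Q\not\sim_w Q'$, because $\sim$ implies $\sim_w$ (via $1-|z|\le|1-z|$) and $\sim_w$ is transitive. The transitivity of $\sim_w$ is the step I expect to need care. I would prove it using the Bures triangle inequality: $d_B^2=2(1-|\langle\cdot,\cdot\rangle|)$ and $d_B(a,c)^2\le 2d_B(a,b)^2+2d_B(b,c)^2$. With $Q\not\sim_w Q'$ in hand, the inner product of the product vectors is $\prod_m\langle q_m,q'_m\rangle$, whose modulus vanishes by the bound above. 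The spanning vectors are therefore mutually orthogonal, and hence $\cH_R\perp\cH_{R'}$ for every phase choice.

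For \eqref{eq:superselection}, I would first take $A$ a local operator acting on registers $m\le n$. In that case $\langle \otimes q_m,A\otimes q'_m\rangle=\langle q_{\le n},Aq'_{\le n}\rangle\prod_{m>n}\langle q_m,q'_m\rangle$. The tail product still vanishes, since removing finitely many terms does not affect divergence of $\sum b_m$. The identity then extends to all $A\in\mathfrak A$ and all $\Phi,\Psi$ by density and norm continuity, using that local operators preserve each incomplete tensor product. Finally, disjointness of the induced states follows: they arise from the inequivalent irreducible representations on $\cH_R$ and $\cH_{R'}$, which are intertwined by nothing nonzero by \eqref{eq:superselection}. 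Again I would defer to \cite{vN39,K48} for this standard representation-theoretic conclusion.
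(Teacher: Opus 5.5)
Your arguments for the convergence and positivity statements, the rephasing $\tilde r'_m$, the orthogonality $\cH_R\perp\cH_{R'}$ (via transitivity of $\sim_w$ from the Bures quasi-triangle inequality), and the superselection identity for local operators are correct. They are more explicit than the paper, which only notes that the partial products are nonincreasing and cites \cite{vN39} and \cite{LesTP} for every sector statement.

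The genuine gap is your last step: disjointness of the induced product states does not follow from \eqref{eq:superselection}. That identity only says that $\cH_R$ and $\cH_{R'}$ are orthogonal $\mathfrak A$-invariant subspaces, and orthogonal invariant subspaces can carry equivalent representations. In fact \eqref{eq:superselection} holds for every pair with $R\not\sim R'$, including weakly equivalent ones. Take $r'_m=e^{i\theta}r_m$ with fixed $\theta\in(0,2\pi)$. Then $\sum_m|1-\langle r_m,r'_m\rangle|=\infty$, so $\cH_R\perp\cH_{R'}$ and the cross matrix elements vanish. Yet $\omega_R=\omega_{R'}$, because the phases cancel in every local expectation. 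So what distinguishes case (ii) cannot be read off from cross matrix elements; it has to be detected at infinity.

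The missing idea is a tail argument. For $n<k$ let $P_{n,k}\in\mathfrak A$ be the projection onto $\bigotimes_{n<m\le k}r_m$ acting on registers $n+1,\dots,k$.
\begin{itemize}
\item[(a)] Vectors $\xi\otimes\bigotimes_{m>n_0}r_m$ are dense in $\cH_R$ and are fixed by $P_{n,k}$ whenever $k>n\ge n_0$. Hence for every unit $\Phi\in\cH_R$ and $\varepsilon>0$ there is $n_0$ with $\langle\Phi,P_{n,k}\Phi\rangle\ge1-4\varepsilon$ for all $k>n\ge n_0$.
\item[(b)] Vectors $\eta\otimes\bigotimes_{m>n'}r'_m$ are dense in $\cH_{R'}$, and on them $\langle\cdot,P_{n,k}\,\cdot\rangle\le\|\eta\|^2\prod_{\max(n,n')<m\le k}|\langle r_m,r'_m\rangle|^2$. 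This tends to $0$ as $k\to\infty$ by Proposition~\ref{prop:master}(i), since every tail of $\sum_m b_m$ diverges. Hence $\langle\Psi,P_{n,k}\Psi\rangle\to0$ for every $\Psi\in\cH_{R'}$ and each fixed $n$.
\item[(c)] A nonzero intertwiner from $\cH_{R'}$ to $\cH_R$ would, through its polar decomposition, map some unit $\Psi\in\cH_{R'}$ to a unit $\Phi\in\cH_R$ with the same expectation values on $\mathfrak A$. By (a) and (b) this is impossible.
\end{itemize}
So the two representations, and a fortiori the cyclic subrepresentations of $\omega_R$ and $\omega_{R'}$, are disjoint. This is Powers' criterion \cite{P67} in the product case.

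A smaller point concerns transitivity of $\sim$. The inequality $|1-ab|\le|1-a|+|1-b|$ does not prove it, since $\langle r_m,r''_m\rangle$ is not the product of $\langle r_m,r'_m\rangle$ and $\langle r'_m,r''_m\rangle$. Use instead the identity $1-\langle x,z\rangle=(1-\langle x,y\rangle)+(1-\langle y,z\rangle)-\langle x-y,z-y\rangle$ together with $\|x-y\|^2=2\,\mathrm{Re}(1-\langle x,y\rangle)$. This yields $|1-\langle x,z\rangle|\le2\bigl(|1-\langle x,y\rangle|+|1-\langle y,z\rangle|\bigr)$.
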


The convergence statements are elementary: the partial products are nonincreasing, and the usual infinite product criterion applies. The sector and superselection statements are von Neumann's \cite{vN39}, \S\S4 - 5, in the gauge invariant form of Section~4 and Proposition~6.1 of \cite{LesTP}; we do not reprove them here.

\begin{remark}[The classical dichotomy]\label{rem:kakutani-classical}
Theorem~\ref{thm:KvN} is the Hilbert space counterpart of Kakutani's theorem on infinite product measures \cite{K48}. If
\begin{equation*}
\begin{split}
\mu&=\bigotimes_m\mu_m,\\
\nu&=\bigotimes_m\nu_m
\end{split}
\end{equation*}
are product probability measures with $\mu_m\sim\nu_m$ for every $m$, then $\mu$ and $\nu$ are either equivalent or mutually singular, according as the product of the per factor Hellinger affinities
\begin{equation*}
\rho_m=\int\sqrt{d\mu_m\,d\nu_m}
\end{equation*}
is positive or zero. Equivalently, according as whether the sum $\sum_m(1-\rho_m)$ converges or diverges. The dictionary with the Hilbert space statement is direct: for pure states, the overlap fidelity is the quantum analogue of the Hellinger affinity, and the two dichotomies are governed by the same product/summability mechanism.

Historically, the order is the reverse of the naming: von Neumann's theory of infinite tensor products appeared in 1939, whereas Kakutani's product measure theorem appeared in 1948. Kakutani's introduction credits von Neumann for the Hilbert space embedding underlying his proof. Bures's subsequent extension of Kakutani's theorem to tensor products of von Neumann algebras \cite{Bu69} is also the source of the metric used in Section~\ref{sec:bures}.
\end{remark}

Applied to the record sequences of a pair of histories, the dichotomy turns asymptotic consistency into an exact statement.

\begin{theorem}[Consistency dichotomy]\label{thm:dichotomy}
For every pair of histories $\hist,\histb$, exactly one of the following holds.
\begin{itemize}
\item[(i)] If
\begin{equation*}
\sum_m b_m<\infty,
\end{equation*}
then the records $R(\hist)$ and $R(\histb)$ are weakly equivalent: the two histories belong to the same weak equivalence class of record sequences. The Dowker - Halliwell ratio converges,
\begin{equation*}
\lim_{N\to\infty}|\mathfrak d_N(\hist,\histb)|=\prod_m F_m,
\end{equation*}
and the limit is strictly positive (so that interference persists at every scale) if and only if $F_m>0$ for every $m$.

\item[(ii)] If
\begin{equation*}
\sum_m b_m=\infty,
\end{equation*}
then
\begin{equation*}
\lim_{N\to\infty}\mathfrak d_N(\hist,\histb)=0.
\end{equation*}
The pair is exactly consistent, and its records lie in orthogonal superselection sectors. In particular, by \eqref{eq:superselection}, no quasi-local observable connects the two branches.
\end{itemize}
\end{theorem}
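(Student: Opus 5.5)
The plan is to reduce everything to Theorem~\ref{thm:KvN} applied to $r_m=r(\Omega_m)$ and $r'_m=r(\Theta_m)$. For this pair, $1-|\langle r_m,r'_m\rangle|=1-F_m=b_m$, so $\sum_m b_m<\infty$ is exactly weak equivalence $R(\hist)\sim_w R(\histb)$ in the sense of Definition~\ref{def:sectors}(b). The two cases of the statement therefore coincide with the two mutually exclusive cases of Theorem~\ref{thm:KvN}, and exactly one of them holds. All statements about the Dowker - Halliwell ratio pass through the identity $|\mathfrak d_N(\hist,\histb)|=\prod_{m\le N}F_m$ from \eqref{eq:DH}, in which the amplitudes $c_{\hist,N}$ have already cancelled.

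\textbf{Case (i).} The partial products $\prod_{m\le N}F_m$ are nonincreasing in $N$ and bounded below by $0$, so they converge. Hence $|\mathfrak d_N|\to\prod_m F_m$. For positivity I would argue in two directions.
\begin{itemize}
\item If some $F_{m_0}=0$, every partial product with $N\ge m_0$ vanishes, so the limit is $0$.
\item If $F_m>0$ for all $m$, then $b_m\to0$ because the series converges. Choose $m_0$ with $b_m\le 1/2$ for $m\ge m_0$. Proposition~\ref{prop:master}(iii), applied to the tail, gives $-\log\prod_{m_0\le m\le N}F_m\le 2\sum_m b_m<\infty$. So the tail product is bounded below by a positive constant, and the finitely many positive initial factors keep the limit strictly positive.
\end{itemize}
This is also the content of Theorem~\ref{thm:KvN}(i). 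Weak equivalence of the records is exactly the summability hypothesis.

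\textbf{Case (ii).} Proposition~\ref{prop:master}(i) gives $|\mathfrak d_N|\le e^{-\Sigma_N}$, and $\Sigma_N\to\infty$, so $\mathfrak d_N\to0$. This holds even though $\mathfrak d_N$ is complex, since only its modulus is needed. The orthogonality $\cH_{R(\hist)}\perp\cH_{R(\histb)}$, the disjointness of the product states, and the vanishing of all quasi-local matrix elements are read off from Theorem~\ref{thm:KvN}(ii) and \eqref{eq:superselection}. Here "exactly consistent" means the normalized off-diagonal term vanishes in the limit $N\to\infty$.

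The only step needing care is the positivity claim in (i) when some $F_m$ are close to $0$. Proposition~\ref{prop:master}(iii) requires $\beta_N<1$, so I split off a finite initial segment before applying it to the tail, as above. Everything else is a direct citation of Theorem~\ref{thm:KvN} and Proposition~\ref{prop:master}.
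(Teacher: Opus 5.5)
Your proposal is correct and follows the paper's own route. The paper's proof is the same one-line reduction to Theorem~\ref{thm:KvN}, via $b_m=1-|\langle r(\Omega_m),r(\Theta_m)\rangle|$ and the identity $|\mathfrak d_N(\hist,\histb)|=\prod_{m\le N}F_m$ from \eqref{eq:DH}; your use of Proposition~\ref{prop:master}(i) and (iii), with (iii) applied to the tail after discarding finitely many positive factors, just spells out the elementary infinite-product facts the paper leaves to the usual convergence criterion.
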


\begin{proof}
Immediate from Theorem~\ref{thm:KvN} and
\begin{equation*}
|\mathfrak d_N(\hist,\histb)| = \prod_{m\le N}F_m
\end{equation*}
from \eqref{eq:DH}.
\end{proof}

\begin{remark}[Perfect records: consistency without sector separation]
\label{rem:perfect}
The positivity clause in Theorem~\ref{thm:dichotomy}(i) is not a pedantry. A single step at which the pair is recorded orthogonally, $F_{m_0}=0$, annihilates the Dowker - Halliwell ratio for all $N\ge m_0$: consistency is achieved kinematically, at one stroke. Yet $\sum_m b_m$ may still converge, in which case the two record sequences remain weakly equivalent and no asymptotic sector separation occurs.

Thus consistency and sector separation are logically distinct:
\begin{equation*}
\sum_m b_m=\infty
\end{equation*}
implies sector separation and exact consistency, whereas
\begin{equation*}
F_{m_0}=0
\end{equation*}
implies exact consistency, but not necessarily sector separation. The two criteria agree away from this degenerate stratum, which is precisely where the textbook idealization of measurement by orthogonal pointer states resides; cf.\ Remark~\ref{rem:trap}. The rate theory of the following sections excludes this case through the hypothesis $\inf_m F_m>0$ along the pair.
\end{remark}

\subsection{Sector decomposition of history families}
\label{subsec:families}

The dichotomy so far concerns pairs. The consistent histories formalism, however, is about \emph{families} of histories, i.e. exhaustive sets of coarse-grained alternatives. It is at the family level that the infinite registration limit delivers its structural payoff. Throughout, $\mathscr F\subseteq X^\infty$ is an arbitrary family of histories, and we define
\begin{equation}\label{eq:merger}
\hist\sim\histb\;\text{ if and only if }\;\sum_m b_m(\hist,\histb)<\infty
\end{equation}
to be the \emph{merger relation} associated with Theorem~\ref{thm:dichotomy}(i).

\begin{lemma}[Merger is an equivalence relation]\label{lem:merger}
The per-step defect is quasi-subadditive: for any three pure record states $\rho,\sigma,\tau$,
\begin{equation}\label{eq:quasi}
1-F(\rho,\tau)\le 2\Bigl((1-F(\rho,\sigma))+(1-F(\sigma,\tau))\Bigr).
\end{equation}
Consequently, the relation \eqref{eq:merger} is an equivalence relation on $X^\infty$.
\end{lemma}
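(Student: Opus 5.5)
We will deduce \eqref{eq:quasi} from the triangle inequality for the Bures distance and then read off the equivalence-relation axioms from it. By \eqref{eq:bures-def}, for pure records the defect is exactly half the squared Bures distance, $1-F(\rho,\tau)=\tfrac12 d_B(\rho,\tau)^2$. The first step is to confirm that $d_B$ is a genuine metric (a pseudometric suffices) on pure states.

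For this we write $d_B(r,r')^2=2(1-|\langle r,r'\rangle|)=\min_{\theta}\|r-e^{i\theta}r'\|^2$. Indeed, $\|r-e^{i\theta}r'\|^2=2-2\,\mathrm{Re}(e^{i\theta}\langle r,r'\rangle)$, and this is minimized when the phase aligns the overlap. Hence $d_B$ is the quotient distance of the Hilbert norm under the phase action of $U(1)$. The triangle inequality then follows in the usual way. Choose $\theta_1$ optimal for the pair $(\rho,\sigma)$ and $\theta_2$ optimal for the pair $(\sigma,\tau)$. Then
\[
d_B(\rho,\tau)\le\|\rho-e^{i(\theta_1+\theta_2)}\tau\|\le\|\rho-e^{i\theta_1}\sigma\|+\|\sigma-e^{i\theta_2}\tau\|,
\]
where the last step uses the unitarity of phase multiplication.

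With the triangle inequality in hand, we square it and apply the elementary bound $(x+y)^2\le 2(x^2+y^2)$. This gives $d_B(\rho,\tau)^2\le 2\bigl(d_B(\rho,\sigma)^2+d_B(\sigma,\tau)^2\bigr)$, and halving both sides yields exactly \eqref{eq:quasi}.

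For the equivalence relation, take three histories $\hist,\histb,\boldsymbol{\Xi}$. Reflexivity holds because $F(\omega,\omega)=1$, so $b_m(\hist,\hist)=0$ for every $m$. Symmetry holds because $F$ is symmetric, being the modulus of an inner product. For transitivity, apply \eqref{eq:quasi} stepwise with $\rho=r(\Omega_m)$, $\sigma=r(\Theta_m)$, and $\tau=r(\Xi_m)$. This gives $b_m(\hist,\boldsymbol{\Xi})\le 2\bigl(b_m(\hist,\histb)+b_m(\histb,\boldsymbol{\Xi})\bigr)$, and summing over $m$ shows that $\hist\sim\histb$ and $\histb\sim\boldsymbol{\Xi}$ together imply $\hist\sim\boldsymbol{\Xi}$. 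The same argument applies verbatim to windowed records (Remark~\ref{rem:window}), since it uses only the pure-state inequality at each step.

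The only nontrivial point is the triangle inequality for $d_B$ on pure states, and the phase-minimization argument above disposes of it. Note that the factor $2$ in \eqref{eq:quasi} is forced by the squaring step: the defect itself is not subadditive, only its square root is.
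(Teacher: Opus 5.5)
Your proposal is correct and follows the paper's own proof. Both arguments use the triangle inequality for the Bures distance in its phase-minimization form $\min_\alpha\|e^{i\alpha}r-r'\|$, then $(a+b)^2\le 2(a^2+b^2)$, then termwise summation to get transitivity. The only difference is that you prove the triangle inequality explicitly by composing optimal phases, whereas the paper cites it from Bures.
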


\begin{proof}
The Bures distance \eqref{eq:bures-def} is a metric on pure-state rays. Equivalently, for unit vector representatives it is the minimum of $\|e^{i\alpha}r-r'\|$ over the phase $\alpha$ \cite{Bu69}. The triangle inequality together with $(a+b)^2\le2(a^2+b^2)$ therefore gives
\begin{equation*}
d_B(\rho,\tau)^2 \le 2\Bigl(d_B(\rho,\sigma)^2+d_B(\sigma,\tau)^2\Bigr),
\end{equation*}
which is \eqref{eq:quasi} after dividing by $2$.

Reflexivity and symmetry of \eqref{eq:merger} are immediate. For transitivity, if $\hist,\histb,\boldsymbol\Xi$ are three histories, then
\begin{equation*}
b_m(\hist,\boldsymbol\Xi) \le 2\Bigl(b_m(\hist,\histb)+b_m(\histb,\boldsymbol\Xi)\Bigr).
\end{equation*}
Hence summability of the two terms on the right implies summability of the term on the left.
\end{proof}

\begin{theorem}[Sector decomposition of history families]
\label{thm:family}
Let $\mathscr F\subseteq X^\infty$ be an arbitrary family of histories, and for a merger class $C\in\mathscr F/\!\sim$ let
\begin{equation*}
\cH_C = \overline{\mathrm{span}}\,\bigcup_{\hist\in C}\cH_{R(\hist)}\subset\cH_{\mathrm{reg}}^{\otimes\infty}
\end{equation*}
be the closed span of the record sectors of its members. Then:
\begin{itemize}
\item[(i)] Distinct merger classes have mutually orthogonal, $\mathfrak A$-invariant record subspaces:
\begin{equation*}
\cH_C\perp\cH_{C'}\qquad(C\neq C'),
\end{equation*}
and
\begin{equation*}
\langle\Phi,A\Psi\rangle=0
\end{equation*}
for every $A\in\mathfrak A$, $\Phi\in\cH_C$, and $\Psi\in\cH_{C'}$.

\item[(ii)] Consequently, the record representation of the quasi-local algebra on
\begin{equation*}
\cH_{\mathscr F}=\overline{\bigoplus_C\cH_C}
\end{equation*}
is block-diagonal over the merger classes:
\begin{equation*}
\pi=\bigoplus_C\pi_C.
\end{equation*}

\item[(iii)] The quotient family is pairwise exactly consistent: for every cross-class pair $[\hist]\neq[\histb]$,
\begin{equation*}
\lim_{N\to\infty}\mathfrak d_N(\hist,\histb)=0.
\end{equation*}
Moreover, the product states induced on $\mathfrak A$ by the cumulative records are quasi-equivalent within each merger class and mutually disjoint across distinct classes.
\end{itemize}
\end{theorem}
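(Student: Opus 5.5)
The proof reduces every statement to the pairwise dichotomy of Theorem~\ref{thm:KvN}(ii), plus one linearity/closure argument. The reduction rests on one observation. If $\hist\in C$ and $\histb\in C'$ with $C\neq C'$, then by definition of the merger relation \eqref{eq:merger} we have $\sum_m b_m(\hist,\histb)=\infty$. This says precisely that $R(\hist)\not\sim_w R(\histb)$, since $b_m=1-|\langle r(\Omega_m),r(\Theta_m)\rangle|$. Theorem~\ref{thm:KvN}(ii) therefore gives $\cH_{R(\hist)}\perp\cH_{R(\histb)}$ and \eqref{eq:superselection} for every pair of generating sectors, one drawn from each class.

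For (i), I pass from generating sectors to the closed spans $\cH_C$, $\cH_{C'}$. The map $(\Phi,\Psi)\mapsto\langle\Phi,A\Psi\rangle$ is sesquilinear and jointly continuous for bounded $A$. It therefore vanishes on $\cH_C\times\cH_{C'}$ as soon as it vanishes on $\cH_{R(\hist)}\times\cH_{R(\histb)}$ for all $\hist\in C$, $\histb\in C'$. Taking $A=1$ gives orthogonality. For $\mathfrak A$-invariance, I would first check that each incomplete tensor product $\cH_R$ is invariant under local operators. A local operator acting on finitely many registers maps a product vector $\bigotimes_m r'_m$ with $R'\sim R$ to a finite combination of product vectors that differ from it in finitely many factors. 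Those vectors are still equivalent to $R$, after writing a non-unit factor as a combination of unit vectors. Invariance then passes to the norm closure $\mathfrak A$ and to closed spans. This step is standard von Neumann theory; I would cite \cite{vN39} and \cite{LesTP}, Section~4, rather than reprove it.

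For (ii), once the $\cH_C$ are mutually orthogonal and invariant, $\cH_{\mathscr F}$ is their orthogonal direct sum. Each orthogonal projection onto $\cH_C$ commutes with $\pi(\mathfrak A)$, since an invariant subspace of a $*$-closed set of operators is reducing. Hence $\pi=\bigoplus_C\pi_C$, with $\pi_C$ the restriction.

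For (iii), exact consistency across classes is Theorem~\ref{thm:dichotomy}(ii) applied to the pair. Disjointness of the induced product states across classes is the disjointness clause of Theorem~\ref{thm:KvN}(ii). Within a class, $\sum_m b_m<\infty$ gives weak equivalence, so Theorem~\ref{thm:KvN}(i) yields quasi-equivalence; transitivity is not needed here because it is a pairwise statement, and Lemma~\ref{lem:merger} already guarantees that the classes are well defined.

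The main obstacle is the $\mathfrak A$-invariance of the sectors, the step that upgrades pairwise orthogonality to the block-diagonal statement. If the citation seemed insufficient, I would verify invariance directly on product vectors using elementary tensors of local operators, as sketched above.
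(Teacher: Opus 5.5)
Your proposal is correct and follows essentially the same route as the paper. In both, a cross-class pair has $\sum_m b_m=\infty$, hence non-weakly-equivalent records, so Theorem~\ref{thm:KvN}(ii) gives orthogonality and the superselection identity; these pass to closed spans by sesquilinearity and continuity, $\mathfrak A$-invariance comes from locality plus norm closure, and (iii) is read off from Theorem~\ref{thm:dichotomy}(ii) and the two clauses of Theorem~\ref{thm:KvN}. Your extra details, namely the reducing-subspace argument for (ii) and the explicit check on product vectors for invariance, only flesh out steps the paper handles by citing Proposition~6.1 of \cite{LesTP}.
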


\begin{proof}
(i) Let $\hist\in C$ and $\histb\in C'$ with $C\neq C'$. By Theorem~\ref{thm:dichotomy},
\begin{equation*}
\sum_m b_m(\hist,\histb)=\infty,
\end{equation*}
so the record sequences are not weakly equivalent. In particular, no representative of the sector of $R(\hist)$ is equivalent to one of $R(\histb)$. Theorem~\ref{thm:KvN}(ii) therefore gives
\begin{equation*}
\cH_{R(\hist)}\perp\cH_{R(\histb)}
\end{equation*}
together with the superselection identity \eqref{eq:superselection} for that pair of sectors. Both statements pass to closed spans by linearity and continuity. Local operators preserve each sector by modifying only finitely many registers of a product vector; invariance then extends to the quasi-local norm closure by continuity, cf. Proposition~6.1 of \cite{LesTP}.

\noindent
(ii) This is (i) restated at the level of the representation: $\pi(A)$ preserves each $\cH_C$ and has no matrix elements between distinct merger classes.

\noindent
(iii) The first assertion is Theorem~\ref{thm:dichotomy}(ii) applied to each cross-class pair. The second is the quasi-equivalence/disjointness dichotomy of Theorem~\ref{thm:KvN}, applied within and across the merger classes.
\end{proof}

\begin{remark}[Additivity for finite coarse-grainings]
\label{rem:additive}
The sector decomposition supplies the algebraic prerequisite for the usual probability calculus. For every \emph{finite} coarse-graining of the merger quotient, all cross-class interference terms vanish in the infinite registration limit, and so the corresponding limiting weights add exactly. For an arbitrary countable or uncountable family, promoting these finite additivity statements to a $\sigma$-additive probability measure on the quotient requires the usual additional measure theoretic hypotheses; we do not impose them here. At finite $N$, the additivity defect is controlled by the corresponding normalized off-diagonal terms, whose decay is graded in Section~\ref{sec:rate}.
\end{remark}

\section{The ultrametric on the space of histories}\label{sec:ultrametric}

The dichotomy is binary; the geometry lies in its rate.

\begin{definition}[Consistency rate]\label{def:delta}
For histories $\hist,\histb$ set
\begin{equation}\label{eq:delta-def}
\begin{split}
\delta(\hist,\histb)&=\cE\bigl(\{b_m(\hist,\histb)\}_m\bigr)\\
&=\cE\bigl(\bigl\{\tfrac12\,d_B\bigl(r(\Omega_m),r(\Theta_m)\bigr)^2\bigr\}_m\bigr).
\end{split}
\end{equation}
\end{definition}

\begin{theorem}[The history ultrametric]\label{thm:ultrametric}
The function $\delta$ is a pseudo-ultrametric on $X^\infty$, and
\begin{equation}\label{eq:delta-is-td}
\delta(\hist,\histb)=\tilde d\bigl([R(\hist)]_w,[R(\histb)]_w\bigr),\end{equation}
the gauge invariant sector metric of \cite{LesTP} evaluated on the record classes.
\end{theorem}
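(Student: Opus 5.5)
The proof establishes the pseudo-ultrametric axioms for $\delta$ directly from properties of the convergence exponent $\cE$, and then identifies $\delta$ with $\tilde d$ by unwinding the definition in \cite{LesTP}.

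\textbf{Elementary properties of $\cE$.} For nonnegative sequences $a,a'$ we will use four facts.
\begin{itemize}
\item[(a)] $\cE(a)\in[0,\infty]$. Since $b_m\le1$ and $\sum m^{-p}<\infty$ for $p>1$, we in fact have $\cE(b)\le 1$, so $\delta$ takes values in $[0,1]$.
\item[(b)] Monotonicity: $a\le a'$ termwise implies $\cE(a)\le\cE(a')$.
\item[(c)] Scale invariance: $\cE(\lambda a)=\cE(a)$ for every $\lambda>0$.
\item[(d)] The sum rule $\cE(a+a')=\max(\cE(a),\cE(a'))$.
\end{itemize}
For (d), the inequality $\ge$ follows from (b). For $\le$, take any $p>\max(\cE(a),\cE(a'))$. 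Then both $\sum a_m m^{-p}$ and $\sum a'_m m^{-p}$ converge, and since the set of admissible $p$ is an up-set, $p$ is admissible for $a+a'$.

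\textbf{Pseudo-ultrametric axioms.} Reflexivity holds because $b_m(\hist,\hist)=0$, so $\delta(\hist,\hist)=0$. Symmetry holds because $F$ is symmetric. For the strong triangle inequality we use the quasi-subadditivity \eqref{eq:quasi} of Lemma~\ref{lem:merger}:
\begin{equation*}
b_m(\hist,\boldsymbol\Xi)\le 2\bigl(b_m(\hist,\histb)+b_m(\histb,\boldsymbol\Xi)\bigr).
\end{equation*}
Applying (b), then (c) to absorb the factor $2$, then (d), gives
\begin{equation*}
\delta(\hist,\boldsymbol\Xi)\le\max\bigl(\delta(\hist,\histb),\delta(\histb,\boldsymbol\Xi)\bigr).
\end{equation*}
This is the whole reason the exponent, rather than the sum itself, yields an ultrametric: the multiplicative constant in the quasi-triangle inequality is invisible to $\cE$.

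\textbf{Identification with $\tilde d$.} Recall from \cite{LesTP} (Section~4 and Eqs.~(52)--(53)) that $\tilde d$ is defined on weak equivalence classes as the convergence exponent of the phase-invariant defects $1-|\langle r_m,r'_m\rangle|$. The gauge-invariant form of the definition means it is already minimized over register phases. For $R(\hist)$ and $R(\histb)$ these defects are exactly $b_m(\hist,\histb)$, which equal $\tfrac12 d_B^2$ by \eqref{eq:bures-def}. Two things must then be checked.
\begin{itemize}
\item[(1)] $\tilde d$ is well defined on classes, i.e.\ independent of the representatives. This holds because changing a representative within its weak class changes the defects by a summable sequence, up to the factor $2$ of \eqref{eq:quasi}. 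A summable sequence has exponent $\le 0$, so by (b)--(d) the exponent is unchanged.
\item[(2)] If \cite{LesTP} phrases $\tilde d$ through the Bures distance or through $|1-\langle r_m,r'_m\rangle|$ minimized over phases, the resulting sequence is comparable to $b_m$ within fixed constant factors. This follows from $\min_\alpha\|e^{i\alpha}r-r'\|^2=2(1-|\langle r,r'\rangle|)$, and (c) then gives equality of the exponents.
\end{itemize}

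The main obstacle is step (2): matching precisely the normalization and phase convention of $\tilde d$ in \cite{LesTP}, in particular showing that the infimum over phase choices inside a weak class does not lower the exponent below $\cE(b)$. I would handle this by noting that $|1-\langle r_m,e^{i\alpha_m}r'_m\rangle|\ge 1-|\langle r_m,r'_m\rangle|$ for every phase $\alpha_m$. This gives a termwise lower bound, and the optimal phase choice attains it, so the infimum is exactly $\cE(b)$.
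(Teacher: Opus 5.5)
Your proposal is correct and follows the paper's proof. Reflexivity and symmetry are immediate; the strong triangle inequality comes from the quasi-subadditivity \eqref{eq:quasi} combined with the scale invariance of $\cE$ and the rule $\cE(a+a')=\max(\cE(a),\cE(a'))$; and the identification with $\tilde d$ is the term-by-term coincidence of the defining series $\sum_m(1-|\langle r(\Omega_m),r(\Theta_m)\rangle|)m^{-p}$. Your additional checks are all sound, and the paper delegates them to \cite{LesTP}: the elementary properties of $\cE$, independence of the choice of weak-class representatives, and the phase minimization $\min_\alpha|1-e^{i\alpha}z|=1-|z|$.
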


\begin{proof}
Symmetry and $\delta(\hist,\hist)=0$ are immediate. For the identity \eqref{eq:delta-is-td}, the per-step records are unit vectors, so the fidelity of the associated pure states is the overlap $F_m$ itself. The defining series of $\delta$ and of $\tilde d$, namely the convergence exponent of
\begin{equation*}
\sum_m \bigl(1-|\langle r(\Omega_m),r(\Theta_m)\rangle|\bigr)m^{-p},
\end{equation*}
cf. Equation~(52) in \cite{LesTP}, therefore coincide term by term.

For the strong triangle inequality, Lemma~\ref{lem:merger} gives, for every third history $\boldsymbol\Xi$,
\begin{equation*}
b_m(\hist,\boldsymbol\Xi)\le 2\Bigl(b_m(\hist,\histb)+b_m(\histb,\boldsymbol\Xi)\Bigr).
\end{equation*}
Multiplication by a fixed positive constant does not change a convergence exponent, and the convergence exponent of the sum of two nonnegative sequences is the maximum of their convergence exponents. Hence
\begin{equation*}
\delta(\hist,\boldsymbol\Xi) \le \max\bigl(\delta(\hist,\histb),\delta(\histb,\boldsymbol\Xi)\bigr),
\end{equation*}
exactly as in Section~4 of \cite{LesTP}.
\end{proof}

We next show that the degeneracy of the pseudo-ultrametric is the only obstruction to completeness. Define
\begin{equation}\label{eq:delta-equivalence}
\hist\equiv_\delta\histb\;\text{ if and only if }\;\delta(\hist,\histb)=0,
\end{equation}
and let
\begin{equation*}
\widehat{\mathscr H}=X^\infty/\equiv_\delta .
\end{equation*}
Since $\delta$ is a pseudo-ultrametric, it descends to an ultrametric on $\widehat{\mathscr H}$.

\begin{theorem}[Completeness of the history space]\label{thm:complete}
The ultrametric space $(\widehat{\mathscr H},\delta)$ is complete. Neither compactness of $X$ nor continuity of the record family is used: the statement holds for an arbitrary record family on an arbitrary outcome set.
\end{theorem}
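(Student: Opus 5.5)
The plan is to prove completeness directly, by a \emph{block pasting} construction on representatives in $X^\infty$. First I reduce to a fast Cauchy subsequence. Then I build a limit history by concatenating longer and longer blocks of successive terms. Finally I control the convergence exponent of the defects against each term, using the fact that $0\le b_m\le 1$. Since $\delta$ is a pseudo-ultrametric on $X^\infty$ (Theorem~\ref{thm:ultrametric}), I can work throughout with representatives and pass to $\widehat{\mathscr H}$ only at the end. Nothing about $X$ or $r$ beyond the numbers $b_m\in[0,1]$ enters, which is why neither compactness nor continuity is needed.

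\emph{Step 1 (reduction).} Let $([\hist^{(k)}])_k$ be Cauchy in $(\widehat{\mathscr H},\delta)$. It suffices to show that some subsequence converges, since a Cauchy sequence with a convergent subsequence converges. Passing to a subsequence, I assume $\delta(\hist^{(k)},\hist^{(k+1)})<\varepsilon_k$ with $\varepsilon_k\downarrow0$. By the strong triangle inequality this gives $\delta(\hist^{(j)},\hist^{(k)})<\varepsilon_j$ for all $k>j$. By the definition of $\cE$, for every $p>\varepsilon_j$ and every $k>j$ the series $\sum_m b_m(\hist^{(j)},\hist^{(k)})\,m^{-p}$ converges. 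I fix the countable set of test exponents $p_{j,i}=\varepsilon_j+1/i$, for $i\ge1$.

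\emph{Step 2 (choice of cut points).} I choose integers $0=M_1<M_2<\cdots$ inductively. Given $M_1,\dots,M_k$, I pick $M_{k+1}>M_k$ so large that
\begin{equation*}
\sum_{m>M_{k+1}} b_m\bigl(\hist^{(j)},\hist^{(k+1)}\bigr)\,m^{-p_{j,i}}\le 2^{-k}
\qquad\text{for all } j\le k,\ i\le k .
\end{equation*}
This is possible because it is a finite list of convergent series. I then define the pasted history $\hist$ by $\Omega_m=\Omega^{(k)}_m$ for $M_k<m\le M_{k+1}$.

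\emph{Step 3 (convergence).} Fix $j$ and $p>\varepsilon_j$, and choose $i$ with $p_{j,i}\le p$. On block $k$, with $k>\max(j,i)$, the pasted history agrees with $\hist^{(k)}$, so $b_m(\hist,\hist^{(j)})=b_m(\hist^{(k)},\hist^{(j)})$ there. Since $p_{j,i}\le p$ and block $k$ lies in $\{m>M_k\}$, the cut point $M_k$ from Step~2 applies (with $k-1$ in place of $k$ there), and the block sum is at most
\begin{equation*}
\sum_{m>M_k} b_m\bigl(\hist^{(j)},\hist^{(k)}\bigr)\,m^{-p_{j,i}}\le 2^{1-k}.
\end{equation*}
The finitely many earlier blocks contribute a finite sum, because $b_m\le1$. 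Hence $\sum_m b_m(\hist,\hist^{(j)})\,m^{-p}<\infty$. As $p>\varepsilon_j$ was arbitrary, this gives $\delta(\hist,\hist^{(j)})\le\varepsilon_j\to0$, so $[\hist^{(j)}]\to[\hist]$ in $\widehat{\mathscr H}$.

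The main obstacle is Step~2. A single cut point must simultaneously tame the tails for all earlier indices $j$ and for a whole continuum of exponents $p>\varepsilon_j$. I handle this with the diagonal choice over the countable grid $p_{j,i}$, using monotonicity of $m^{-p}$ in $p$ to pass from the grid to arbitrary $p$. The argument uses only the boundedness $0\le b_m\le1$ and the strong triangle inequality, so it applies verbatim to any record family on any outcome set.
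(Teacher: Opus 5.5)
Your argument is correct, but it organizes the pasting estimate differently from the paper. The paper passes to a subsequence with $\delta(\hist^{(k)},\hist^{(k+1)})\le 2^{-(k+1)}$. It then controls only the tails of the \emph{consecutive} defect series $b_m(\hist^{(k)},\hist^{(k+1)})$, at the single exponent $2^{-k}$, with tails at most $8^{-k}$. On block $l$ it bounds $b_m(\hist^{(l)},\hist^{(k)})$ pointwise by iterating the quasi-subadditivity \eqref{eq:quasi} along the chain $\hist^{(k)},\dots,\hist^{(l)}$. The chaining coefficients $2^{j-k+1}$ grow exponentially, and after interchanging the order of summation they are beaten by the $8^{-j}$ tails.

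You instead use \eqref{eq:quasi} only through the exponent-level strong triangle inequality of Theorem~\ref{thm:ultrametric}. Since $\delta(\hist^{(j)},\hist^{(k)})<\varepsilon_j$ for all $k>j$, every \emph{direct} series $\sum_m b_m(\hist^{(j)},\hist^{(k)})\,m^{-p}$ with $p>\varepsilon_j$ already converges. Your diagonal choice of $M_{k+1}$ then tames finitely many of these series at each stage. This removes the pointwise chain estimate and the coefficient bookkeeping. It also makes clear that completeness depends only on $\delta$ being a pseudo-ultrametric built from a defect $b_m$ that depends only on the $m$-th coordinates. In exchange, the paper's version controls just one series per cut point and expresses everything through consecutive defects.

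One simplification: the grid $p_{j,i}$, which you flag as the main obstacle, is unnecessary. The bound $\delta(\hist^{(j)},\hist^{(k)})<\varepsilon_j$ is strict and $\cE$ is an infimum. Hence $\sum_m b_m(\hist^{(j)},\hist^{(k)})\,m^{-\varepsilon_j}$ converges at the exponent $\varepsilon_j$ itself, so it suffices to control tails at $p=\varepsilon_j$ for $j\le k$. This gives $\sum_m b_m(\hist,\hist^{(j)})\,m^{-\varepsilon_j}<\infty$, and hence $\delta(\hist,\hist^{(j)})\le\varepsilon_j$ directly. The paper does the same thing when it uses the exponent $2^{-k}$ against the strict bound $2^{-(k+1)}<2^{-k}$. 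Likewise, the bound $b_m\le 1$ is not needed for the early blocks, since finitely many finite terms always have a finite sum.
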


\begin{proof}
Let $(\hist^{(n)})$ be a $\delta$-Cauchy sequence of histories. Since in a pseudometric space a Cauchy sequence converges as soon as one of its subsequences does, we may pass to a subsequence, still denoted $(\hist^{(k)})$, such that
\begin{equation}\label{eq:fast}
\delta\bigl(\hist^{(k)},\hist^{(k+1)}\bigr)\le 2^{-(k+1)},\qquad k\ge1.
\end{equation}
Write
\begin{equation*}
b^{(k)}_m=b_m\bigl(\hist^{(k)},\hist^{(k+1)}\bigr)
\end{equation*}
for the consecutive defect sequences.

\noindent
\emph{Choice of blocks.}
By \eqref{eq:fast},
\begin{equation*}
\begin{split}
\cE(b^{(k)})&\le2^{-(k+1)}\\
&<2^{-k}.
\end{split}
\end{equation*}
Hence
\begin{equation*}
\sum_m m^{-2^{-k}}b^{(k)}_m<\infty.
\end{equation*}
We may therefore choose integers
\begin{equation*}
0=N_0<N_1<N_2<\cdots
\end{equation*}
such that
\begin{equation}\label{eq:tails}
\sum_{m>N_k}
m^{-2^{-k}}b^{(k)}_m
\le
8^{-k},
\qquad k\ge1.
\end{equation}
Define the pasted history $\hist^*\in X^\infty$ by
\begin{equation}\label{eq:paste}
\Omega^*_m=\Omega^{(l)}_m,
\qquad
\text{for }N_{l-1}<m\le N_l.
\end{equation}
Every coordinate of $\hist^*$ is an outcome in $X$, so $\hist^*$ is a history. In particular, no coordinatewise limit is taken: neither a closed image condition on the record map nor any topological assumption on $X$ is required.

\medskip
\noindent
\emph{Convergence along the subsequence.}
Fix $k\ge1$ and $p>2^{-k}$. We show that
\begin{equation}\label{eq:paste-series}
\sum_m b_m(\hist^*,\hist^{(k)})\,m^{-p}<\infty,
\end{equation}
which implies that
\begin{equation*}
\delta(\hist^*,\hist^{(k)})\le2^{-k}.
\end{equation*}

Split the sum into the blocks
\begin{equation*}
B_l=\{m:\,N_{l-1}<m\le N_l\}.
\end{equation*}
The blocks with $l\le k$ contain only finitely many terms, each bounded by $1$. On a block $B_l$ with $l>k$,
\begin{equation*}
b_m(\hist^*,\hist^{(k)})=b_m(\hist^{(l)},\hist^{(k)}).
\end{equation*}
Iterating the quasi-subadditivity \eqref{eq:quasi} along the chain
\begin{equation*}
\hist^{(k)},\hist^{(k+1)},\ldots,\hist^{(l)}
\end{equation*}
gives
\begin{equation}\label{eq:chain}
b_m\bigl(\hist^{(l)},\hist^{(k)}\bigr)\le\sum_{j=k}^{l-1}2^{\,j-k+1}b^{(j)}_m.
\end{equation}
Indeed, \eqref{eq:chain} follows by induction on $l-k$ from \eqref{eq:quasi}; the displayed coefficients are a convenient, slightly nonoptimal upper bound.

Since all terms are nonnegative, we may sum over the blocks and interchange the order of summation:
\begin{equation*}
\begin{split}
&\sum_{l>k}\sum_{m\in B_l}m^{-p}\,b_m\bigl(\hist^{(l)},\hist^{(k)}\bigr)\\
&\qquad\le\sum_{j\ge k}2^{\,j-k+1}\sum_{m>N_j}m^{-p}b^{(j)}_m.
\end{split}
\end{equation*}
Here, for each fixed $j$, the blocks $B_l$ with $l\ge j+1$ exhaust $\{m>N_j\}$. Since $j\ge k$,
\begin{equation*}
\begin{split}
p&>2^{-k}\\
&\ge2^{-j},
\end{split}
\end{equation*}
and therefore
\begin{equation*}
m^{-p}\le m^{-2^{-j}}.
\end{equation*}
Using \eqref{eq:tails}, we obtain \begin{equation*}
\begin{split}
\sum_{j\ge k}2^{\,j-k+1}\sum_{m>N_j}m^{-p}b^{(j)}_m&\le\sum_{j\ge k}2^{\,j-k+1}8^{-j}\\
&=2^{1-k}\sum_{j\ge k}4^{-j}\\
&<\infty.
\end{split}
\end{equation*}
Thus the exponential growth of the chaining coefficients in \eqref{eq:chain} is dominated by the geometric decay of the block tails in \eqref{eq:tails}. Together with the finite contribution from the blocks $l\le k$, this proves \eqref{eq:paste-series} for every $p>2^{-k}$, and hence
\begin{equation}\label{eq:subsequence-convergence}
\delta(\hist^*,\hist^{(k)})\le2^{-k}.
\end{equation}

\medskip
\noindent
\emph{Convergence of the full sequence.} Return now to the original Cauchy sequence, and denote by $\hist^{(n_k)}$ the $k$-th member of the chosen subsequence. Given
$\varepsilon>0$, choose $k$ so large that
\begin{equation*}
2^{-k}<\varepsilon.
\end{equation*}
Since the original sequence is Cauchy, there exists $n_0$ such that
\begin{equation*}
\delta\bigl(\hist^{(n)},\hist^{(n_k)}\bigr)<\varepsilon
\qquad
\text{for all }n\ge n_0.
\end{equation*}
By the strong triangle inequality and \eqref{eq:subsequence-convergence},
\begin{equation*}
\begin{split}
\delta(\hist^{(n)},\hist^*)&\le \max\big(\delta\bigl(\hist^{(n)},\hist^{(n_k)}\bigr),\delta\bigl(\hist^{(n_k)},\hist^*\bigr)\big)\\
&< \varepsilon
\end{split}
\end{equation*}
for $n\ge n_0$. Thus every $\delta$-Cauchy sequence converges to the $\equiv_\delta$-class of a history, and $(\widehat{\mathscr H},\delta)$ is complete.
\end{proof}

The preceding construction also survives bounded temporal dependence of the record map.

\begin{proposition}[Completeness for windowed records]
\label{prop:window-complete}
The conclusion of Theorem~\ref{thm:complete} persists for bounded window record families
\begin{equation*}
r_m=r(\Omega_{m-1},\Omega_m)
\end{equation*}
as in Remark~\ref{rem:window}, provided the block boundaries in the pasting construction are chosen to grow at least geometrically, $N_k\ge2^k$.
\end{proposition}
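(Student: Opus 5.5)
The plan is to rerun the block pasting proof of Theorem~\ref{thm:complete} verbatim and to isolate the one place where windowed records behave differently: the steps that straddle a block boundary. All defects are now $b_m(\hist,\histb)=1-|\langle r(\Omega_{m-1},\Omega_m),r(\Theta_{m-1},\Theta_m)\rangle|$, with some fixed convention for $\Omega_0$. These are still one half of squared Bures distances between pure records. Hence Lemma~\ref{lem:merger}, the quasi-subadditivity \eqref{eq:quasi} and Theorem~\ref{thm:ultrametric} apply unchanged, so $\delta$ is again a pseudo-ultrametric. As before, I would pass to a fast Cauchy subsequence satisfying \eqref{eq:fast}, and write $b^{(k)}_m$ for the consecutive windowed defects.

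\emph{Choice of blocks.} The tail condition \eqref{eq:tails} is monotone in $N_k$: once it holds for some $N_k$, it holds for every larger value. So I can choose $N_k$ to satisfy \eqref{eq:tails}, to be strictly increasing, and also to satisfy $N_k\ge 2^k$. I then define $\hist^*$ by the pasting rule \eqref{eq:paste}.

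\emph{Splitting the series.} Fix $k$ and $p>2^{-k}$, and split $\sum_m b_m(\hist^*,\hist^{(k)})m^{-p}$ into three kinds of terms.
\begin{itemize}
\item[(a)] Terms with $m\le N_k+1$. These are finitely many, each at most $1$.
\item[(b)] Interior terms of a block $B_l$ with $l>k$, meaning $N_{l-1}+1<m\le N_l$. Here both $\Omega^*_{m-1}$ and $\Omega^*_m$ come from $\hist^{(l)}$, so the windowed record of $\hist^*$ at step $m$ equals that of $\hist^{(l)}$. Therefore $b_m(\hist^*,\hist^{(k)})=b_m(\hist^{(l)},\hist^{(k)})$, and the chain bound \eqref{eq:chain} followed by the interchange of sums goes through exactly as in the proof of Theorem~\ref{thm:complete}. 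This gives a finite contribution.
\item[(c)] Boundary terms $m=N_{l-1}+1$ with $l>k+1$. Here the record of $\hist^*$ mixes $\Omega^{(l-1)}_{m-1}$ with $\Omega^{(l)}_m$, so it coincides with the record of no single member of the sequence. I would not try to control it and simply bound it by $1$.
\end{itemize}

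\emph{Main obstacle.} The boundary terms (c) are the only new difficulty, and the growth condition is designed to absorb them crudely. Their total contribution is at most $\sum_{l}N_{l-1}^{-p}\le\sum_l 2^{-(l-1)p}$. This is a geometric series, which converges for every $p>0$ and in particular for $p>2^{-k}$. Hence \eqref{eq:paste-series} holds, so $\delta(\hist^*,\hist^{(k)})\le 2^{-k}$. Convergence of the full Cauchy sequence then follows from the strong triangle inequality exactly as before.

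For a window of any fixed length $w$, each block boundary produces at most $w-1$ mixed steps instead of one. The boundary sum is then bounded by $(w-1)\sum_l 2^{-(l-1)p}$, so the same argument works without change.
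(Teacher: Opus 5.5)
Your proposal is correct and follows the paper's proof essentially step for step. You choose $N_k$ to satisfy both \eqref{eq:tails} and $N_k\ge 2^k$, apply the chain estimate \eqref{eq:chain} at interior sites where the windowed records of $\hist^*$ and $\hist^{(l)}$ coincide, and absorb the hybrid boundary sites by the crude bound $\sum_l 2^{-p(l-1)}<\infty$, valid for every $p>0$ (your monotonicity-of-tails and $\Omega_0$-convention remarks just make explicit what the paper leaves implicit).
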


\begin{proof}
In the proof of Theorem~\ref{thm:complete}, the integers $N_k$ may clearly be chosen to satisfy both \eqref{eq:tails} and $N_k\ge2^k$.

For two-step windowed records, pasting corrupts the record of $\hist^*$ at exactly one site per block boundary: the site $m=N_{l-1}+1$, whose window $\bigl(\Omega^{(l-1)}_{N_{l-1}},\Omega^{(l)}_{N_{l-1}+1}\bigr)$ is a hybrid belonging to neither history. At the interior sites $m\ge N_{l-1}+2$ of $B_l$, the windowed records of $\hist^*$ and $\hist^{(l)}$ coincide. The chain estimate \eqref{eq:chain} therefore applies to the interior sites exactly as in Theorem~\ref{thm:complete}, since the quasi-subadditivity \eqref{eq:quasi} holds for any three pure record states.

The boundary sites contribute, crudely,
\begin{equation*}
\begin{split}
\sum_{l>k}\bigl(N_{l-1}+1\bigr)^{-p}&\le\sum_{l\ge k}2^{-p(l-1)}\\
&<\infty
\end{split}
\end{equation*}
for every $p>0$. Thus the boundary defects have convergence exponent zero and constitute a $\delta$-null perturbation.

Adding this boundary contribution to the interior estimate from Theorem~\ref{thm:complete} shows that, for every $p>2^{-k}$, the weighted defect series between the pasted history and the $k$-th subsequence member converges. Hence
\begin{equation*}
\delta(\hist^*,\hist^{(k)})\le2^{-k}
\end{equation*}
exactly as before, and the strong triangle inequality gives convergence of the full Cauchy sequence.

The same argument applies to any fixed window length: only finitely many sites per block boundary are corrupted, and geometric growth of the block boundaries makes their contribution summable for every $p>0$.
\end{proof}

\begin{remark}[Relation to the prefix tree metric]\label{rem:tree}
Histories also carry the usual first-disagreement ultrametric
\begin{equation*}
\delta_0(\hist,\histb)=\begin{cases}
0,
& \hist=\histb,\\[1mm]
2^{-\min\{m:\,\Omega_m\neq\Theta_m\}},
& \hist\neq\histb.
\end{cases}
\end{equation*}
The two geometries encode complementary information rather than one refining the other. The prefix metric $\delta_0$ measures \emph{when} histories first separate. The consistency rate pseudo-ultrametric $\delta$ ignores every finite prefix and measures the \emph{asymptotic
rate} at which their records separate.

Indeed, histories may agree for the first $n$ steps and then have $b_m\asymp1$ forever; as $n\to\infty$, their $\delta_0$-distance tends to zero while $\delta=1$ for every $n$. Conversely, histories can disagree from the first step with $b_m\asymp1/m$; then their prefix distance is maximal while $\delta=0$. Thus the topologies are generally incomparable. This asymptotic insensitivity to finite prefixes is precisely what makes $\delta$ a rate geometry rather than a tree location geometry.
\end{remark}

We finally relate the rate geometry to the merger quotient of Section~\ref{sec:dichotomy}. It is important that the merger relation is strictly finer than the zero distance relation of $\delta$: distinct merger classes may still lie at zero $\delta$-distance in the marginal regime.

\begin{corollary}[Descent and the coarse-graining filtration]
\label{cor:filtration}
Let $\mathscr F\subseteq X^\infty$ be a family of histories.
\begin{itemize}
\item[(i)] The pseudo-ultrametric $\delta$ descends to the merger quotient $\mathscr F/\!\sim$ of \eqref{eq:merger}. Its vanishing set beyond the diagonal is precisely the marginal stratum:
\begin{equation*}
\delta([\hist],[\histb])=0,
\qquad
[\hist]\neq[\histb],
\end{equation*}
if and only if
\begin{equation*}
\begin{split}
\sum_m b_m(\hist,\histb)&=\infty,\\
\Sigma_N(\hist,\histb)&=N^{o(1)}.
\end{split}
\end{equation*}

\item[(ii)] For each $\theta\in[0,1]$, the relation
\begin{equation*}
\delta\le\theta
\end{equation*}
is an equivalence relation on $\mathscr F/\!\sim$. Its classes, namely the closed $\delta$-balls of radius $\theta$, form a partition $\mathscr P_\theta$ of the quotient family. These partitions are nested: $\mathscr P_\theta$ refines $\mathscr P_{\theta'}$ whenever $\theta\le\theta'$. Thus every family of histories carries a canonical
one-parameter hierarchy of coarse-grainings indexed by consistency rate.
\end{itemize}
\end{corollary}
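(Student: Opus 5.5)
Part (i) rests on the strong triangle inequality (Theorem~\ref{thm:ultrametric}) applied to merger-equivalent histories. If $\hist\sim\hist'$, then $\sum_m b_m(\hist,\hist')<\infty$, so $\delta(\hist,\hist')=0$: a summable sequence has convergence exponent $0$, since $\sum_m b_m m^{-p}\le\sum_m b_m$ for every $p\ge0$. The strong triangle inequality then gives $\delta(\hist,\histb)\le\max(\delta(\hist,\hist'),\delta(\hist',\histb))=\delta(\hist',\histb)$, and by symmetry equality holds. The same argument in the second slot shows that $\delta([\hist],[\histb])$ is well defined on $\mathscr F/\!\sim$, and it remains a pseudo-ultrametric there.

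For the characterization of the vanishing set, note that $[\hist]\neq[\histb]$ means exactly $\sum_m b_m=\infty$. So the task is to show, for a nonnegative sequence $b_m\in[0,1]$, that $\cE(b)=0$ is equivalent to $\Sigma_N=N^{o(1)}$, i.e.\ $\log\Sigma_N/\log N\to0$ (given $\Sigma_N\to\infty$, the quantity $\log^+$ is harmless). I will use the standard identification of the convergence exponent with the growth exponent of partial sums:
\begin{equation*}
\cE(b)=\limsup_{N\to\infty}\frac{\log^+\Sigma_N}{\log N}.
\end{equation*}
Both inequalities come from Abel summation, $\sum_{m\le N}b_m m^{-p}=\Sigma_N N^{-p}+p\sum_{m<N}\Sigma_m\int_m^{m+1}t^{-p-1}\,dt$ up to bounded error.

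For ``$\le$'': if $\Sigma_N\le CN^{q}$ with $q<p$, every term is $O(m^{q-p-1})$, so the series converges. For ``$\ge$'': if the series converges at $p$, then $\Sigma_N\le N^p\sum_{m\le N}b_m m^{-p}=O(N^p)$. Since $b_m\le1$ gives $\Sigma_N\le N$, the exponent lies in $[0,1]$, consistent with the range of $\theta$ in part (ii). Applying this with $\cE=0$ yields $\Sigma_N=N^{o(1)}$, and conversely. This partial-sum lemma is the only real computation, and it is where I expect to spend care: the main point is handling the $\limsup$ correctly, since $\Sigma_N=N^{o(1)}$ must mean $\limsup\log\Sigma_N/\log N=0$, which is what the lemma delivers.

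For part (ii), reflexivity and symmetry of $\delta\le\theta$ are immediate. Transitivity is the strong triangle inequality on the quotient: if $\delta(a,b)\le\theta$ and $\delta(b,c)\le\theta$, then $\delta(a,c)\le\max(\delta(a,b),\delta(b,c))\le\theta$. The equivalence classes are therefore the closed balls, because in an ultrametric every point of a closed ball is a center, so the balls are either disjoint or equal. Nesting holds because $\delta\le\theta$ implies $\delta\le\theta'$ whenever $\theta\le\theta'$, so each $\theta$-class lies inside a single $\theta'$-class. Finally, since $\delta\le1$ always, $\mathscr P_1$ is the trivial partition, and $\mathscr P_0$ is the partition by the marginal-stratum relation described in part (i).
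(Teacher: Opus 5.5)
Your proposal is correct and follows the paper's proof essentially step for step. For descent you use $\delta(\hist,\hist')=0$ for merged histories together with the strong triangle inequality in each slot, and for part (ii) you get transitivity of $\delta\le\theta$ from the strong triangle inequality, with closed balls as classes and nesting by monotonicity in $\theta$. The only difference is that you prove the partial-sum characterization $\cE(b)=\limsup_N \log^+\Sigma_N/\log N$ yourself by Abel summation, whereas the paper simply cites it as Equation~(13) of \cite{LesTP}; your argument for both inequalities is sound.
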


\begin{proof}
(i) Suppose $\hist\sim\hist'$. Then
\begin{equation*}
\sum_m b_m(\hist,\hist')<\infty,
\end{equation*}
and therefore
\begin{equation*}
\delta(\hist,\hist')=0.
\end{equation*}
The strong triangle inequality gives
\begin{equation*}
\begin{split}
\delta(\hist,\histb) &\le \max\big(\delta(\hist,\hist'),\,\delta(\hist',\histb)\big)\\
&=\delta(\hist',\histb).
\end{split}
\end{equation*}
Interchanging $\hist$ and $\hist'$ gives the reverse inequality, so
\begin{equation*}
\delta(\hist,\histb)=\delta(\hist',\histb).
\end{equation*}
The same argument in the second variable shows that $\delta$ is well defined on merger classes and inherits the pseudo-ultrametric properties. The description of its off-diagonal vanishing set is the partial-sum characterization of $\cE$, Equation~(13) in \cite{LesTP}.

\noindent
(ii) Reflexivity and symmetry are immediate. If
\begin{equation*}
\begin{split}
\delta(\hist,\histb)&\le\theta,\\
\delta(\histb,\boldsymbol\Xi)&\le\theta,
\end{split}
\end{equation*}
then the strong triangle inequality gives
\begin{equation*}
\begin{split}
\delta(\hist,\boldsymbol\Xi)&\le\max\big(\delta(\hist,\histb),\,\delta(\histb,\boldsymbol\Xi)\big)\\
&\le\theta.
\end{split}
\end{equation*}
Thus $\delta\le\theta$ is an equivalence relation. In a pseudo-ultrametric space its equivalence classes are precisely the closed balls of radius $\theta$, and monotonicity in $\theta$ gives the stated nested filtration.
\end{proof}

\section{The consistency rate: decay and operational meaning}
\label{sec:rate}

Throughout this section, write
\begin{equation}\label{eq:SN}
\begin{split}
S_N &=-\log\bigl|\mathfrak d_N(\hist,\histb)\bigr|\\
&= \sum_{m\le N}(-\log F_m)
\end{split}
\end{equation}
for the negative log-interference of the pair.

\begin{proposition}[Rate form of consistency]\label{prop:rate}
Suppose
\begin{equation*}
\inf_m F_m\ge c>0
\end{equation*}
along the pair. Then
\begin{equation}\label{eq:rate}
\delta(\hist,\histb)=\limsup_{N\to\infty}\,\frac{\log^+ S_N}{\log N}\,.
\end{equation}
If the pair decoheres, so that $S_N\to\infty$, this is equivalently characterized as follows: for every $\varepsilon>0$,
\begin{equation*}
S_N\le N^{\delta+\varepsilon}
\end{equation*}
for all sufficiently large $N$, while
\begin{equation*}
S_N\ge N^{\delta-\varepsilon}
\end{equation*}
along an infinite subsequence.

In particular, $\delta\in[0,1]$ always, and $\delta=1$ is the maximal possible polynomial growth exponent of the negative log-interference. Pairs satisfying
\begin{equation*}
\delta=0,
\qquad
\sum_m b_m=\infty,
\end{equation*}
form the \emph{marginal stratum} of \cite{LesTP}. They are exactly
consistent in the limit, but
\begin{equation*}
S_N=N^{o(1)},
\end{equation*}
so their normalized interference decays more slowly than every stretched
exponential $\exp(-N^\varepsilon)$, $\varepsilon>0$.

This stratum includes genuinely polynomial decay of the interference itself. For example, if
\begin{equation*}
b_m\sim\frac{a}{m},
\qquad a>0,
\end{equation*}
then
\begin{equation*}
S_N\sim a\log N,
\end{equation*}
and consequently
\begin{equation*}
|\mathfrak d_N(\hist,\histb)| = N^{-a+o(1)}.
\end{equation*}
\end{proposition}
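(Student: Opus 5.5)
The plan is to reduce everything to the partial sums $\Sigma_N=\sum_{m\le N}b_m$ via Proposition~\ref{prop:master}. I will then use the standard partial-sum characterization of the convergence exponent of a bounded nonnegative sequence:
\begin{equation*}
\cE(b)=\limsup_{N\to\infty}\frac{\log^+\Sigma_N}{\log N}.
\end{equation*}
This is Equation~(13) of \cite{LesTP}, and I would reprove it briefly via Abel summation. Under $\inf_m F_m\ge c>0$, parts (i) and (ii) of Proposition~\ref{prop:master} give
\begin{equation*}
\Sigma_N\le S_N\le\kappa(c)\Sigma_N \qquad\text{for all } N.
\end{equation*}
Since $\log^+(Kx)\le\log^+x+\log^+K$, the multiplicative constants $1$ and $\kappa(c)$ are invisible after dividing by $\log N$. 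The limsup for $S_N$ therefore equals the limsup for $\Sigma_N$, which is $\delta$ by Definition~\ref{def:delta}. This gives \eqref{eq:rate}.

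For the partial-sum characterization, let $\lambda$ denote the limsup. First suppose $p>\lambda$. Choose $\lambda<q<p$, so that $\Sigma_N\le N^q$ for large $N$. Abel summation gives
\begin{equation*}
\sum_m b_m m^{-p}=\sum_N \Sigma_N\bigl(N^{-p}-(N+1)^{-p}\bigr)+\lim\Sigma_N N^{-p},
\end{equation*}
and $N^{-p}-(N+1)^{-p}\asymp N^{-p-1}$, so the series converges. This shows $\delta\le\lambda$. Conversely, if $\sum_m b_m m^{-p}<\infty$ with $p\ge0$, then
\begin{equation*}
\Sigma_N\le N^p\sum_{m\le N}b_m m^{-p}=O(N^p),
\end{equation*}
so $\lambda\le p$ and hence $\lambda\le\delta$. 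The only care needed is with $\log^+$ when $\Sigma_N$ stays bounded; in that case both sides are $0$. This step is the main (mild) obstacle, and it is standard.

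The equivalent formulation for a decohering pair is simply the definition of limsup of $\log S_N/\log N$ once $S_N\to\infty$, since then $\log^+=\log$ eventually. The bound $\delta\le1$ follows from $b_m\le1$, which gives $\Sigma_N\le N$. Maximality of $\delta=1$ is then the statement that $S_N\le\kappa(c)N$.

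For the marginal stratum, $\delta=0$ gives $S_N=N^{o(1)}$. Hence for each $\varepsilon>0$ we have $S_N\le N^{\varepsilon/2}$ eventually, so
\begin{equation*}
|\mathfrak d_N|=e^{-S_N}\ge\exp(-N^{\varepsilon/2}).
\end{equation*}
This is eventually much larger than $\exp(-N^\varepsilon)$. Divergence of $\sum_m b_m$ gives exact consistency by Theorem~\ref{thm:dichotomy}.

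For the example $b_m\sim a/m$, note that $b_m\to0$, so $-\log F_m=-\log(1-b_m)=b_m(1+O(b_m))$. Summing gives $S_N=a\log N+o(\log N)$, because the error terms $O(b_m^2)$ are summable and the remainders $o(1/m)$ sum to $o(\log N)$. Exponentiating yields $|\mathfrak d_N|=N^{-a+o(1)}$.
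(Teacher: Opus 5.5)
Your proposal is correct and follows the paper's proof. You sandwich $\Sigma_N\le S_N\le\kappa(c)\Sigma_N$ using Proposition~\ref{prop:master}(i)--(ii), transfer the polynomial growth exponent through the partial-sum characterization of $\cE$, get $\delta\le1$ from $\Sigma_N\le N$, and handle the $b_m\sim a/m$ example via $-\log(1-b_m)\sim b_m$. The only differences are that you reprove the partial-sum characterization by Abel summation instead of citing Equation~(13) of \cite{LesTP}, and that you track the error terms in the marginal-stratum and example steps more explicitly.
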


\begin{proof}
By Proposition~\ref{prop:master}(i) - (ii),
\begin{equation*}
\Sigma_N
\le
S_N
\le
\kappa(c)\Sigma_N.
\end{equation*}
Hence $S_N$ and $\Sigma_N$ have the same polynomial growth exponent. By the partial-sum characterization of the convergence exponent, cf. Equation~(13) in \cite{LesTP},
\begin{equation*}
\begin{split}
\limsup_{N\to\infty}\,\frac{\log^+\Sigma_N}{\log N}&=\cE(b)\\
&=\delta(\hist,\histb),
\end{split}
\end{equation*}
which proves \eqref{eq:rate}. If $S_N\to\infty$, the stated upper and subsequential lower bounds are the standard characterization of this limsup.

The normalization \eqref{eq:DH} has removed the branch amplitudes, so no hypothesis on the path weights is required. Since $0\le b_m\le1$,
\begin{equation*}
\Sigma_N\le N,
\end{equation*}
and therefore $\delta\le1$.

If $\delta=0$ and $\sum_m b_m=\infty$, then $\Sigma_N\to\infty$ and hence $S_N\to\infty$, while \eqref{eq:rate} gives
\begin{equation*}
S_N=N^{o(1)}.
\end{equation*}
Finally, if $b_m\sim a/m$, then
\begin{equation*}
\begin{split}
-\log F_m&=-\log(1-b_m)\\
&\sim b_m\\
&\sim \frac{a}{m}\,,
\end{split}
\end{equation*}
so
\begin{equation*}
S_N\sim a\log N.
\end{equation*}
Exponentiating $|\mathfrak d_N|=e^{-S_N}$ gives
\begin{equation*}
|\mathfrak d_N|=N^{-a+o(1)}.
\end{equation*}
\end{proof}

\begin{remark}[Excluding perfect records]\label{rem:rate-positive}
The hypothesis $\inf_mF_m>0$ excludes the perfect record stratum of Remark~\ref{rem:perfect}. Its role is precisely to ensure that the accumulated record defect $\Sigma_N$ and the negative log-interference $S_N$ remain comparable:
\begin{equation*}
\Sigma_N\le S_N\le\kappa(c)\Sigma_N.
\end{equation*}
Thus the additive Bures functional and the suppression of normalized interference have the same polynomial scale.
\end{remark}

\begin{corollary}[Regular rate asymptotics]\label{cor:regular}
The limsup in \eqref{eq:rate} does not by itself control $|\mathfrak d_N|$ along the full sequence. If, however, the limit
\begin{equation*}
\lim_{N\to\infty}\frac{\log S_N}{\log N}
\end{equation*}
exists (for instance, if $\Sigma_N$ is regularly varying with positive index) then
\begin{equation*}
S_N=N^{\delta+o(1)}
\end{equation*}
and
\begin{equation}\label{eq:regular-decay}
|\mathfrak d_N(\hist,\histb)| = \exp\bigl(-N^{\delta+o(1)}\bigr).
\end{equation}
\end{corollary}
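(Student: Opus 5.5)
Corollary~\ref{cor:regular} is read here under the standing hypothesis of Proposition~\ref{prop:rate}, namely $\inf_m F_m\ge c>0$ along the pair, so that \eqref{eq:rate} is available. The plan is to upgrade the limsup in \eqref{eq:rate} to a genuine limit and then simply exponentiate.

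First, assume that $L=\lim_{N\to\infty}\log S_N/\log N$ exists. We must rule out the possibility that $S_N$ stays bounded or tends to zero, since then $\log S_N$ could be negative and differ from $\log^+ S_N$. If $L>0$, then $S_N\to\infty$. Hence $\log^+S_N=\log S_N$ for all large $N$. The limsup in \eqref{eq:rate} is then an honest limit, and so $L=\delta$. If $L\le 0$, we still have $\log^+S_N/\log N\to \max(L,0)=0$, which gives $\delta=0$. Since $S_N$ is nondecreasing, $S_N\ge S_1$; if $S_1>0$ this forces $L\ge0$, so $L=0=\delta$. The degenerate case $S_N\equiv0$ (identical records) is excluded or treated trivially. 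In every case, $\log S_N/\log N\to\delta$, which is exactly the statement $S_N=N^{\delta+o(1)}$.

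Second, recall from \eqref{eq:SN} that $|\mathfrak d_N(\hist,\histb)|=e^{-S_N}$. Substituting $S_N=N^{\delta+o(1)}$ gives \eqref{eq:regular-decay} directly.

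Finally, consider the parenthetical sufficient condition. Suppose $\Sigma_N=N^{\alpha}\ell(N)$ with $\alpha>0$ and $\ell$ slowly varying. The standard fact $\log\ell(N)/\log N\to0$ gives $\log\Sigma_N/\log N\to\alpha$. By Proposition~\ref{prop:master}(i)--(ii), $\Sigma_N\le S_N\le\kappa(c)\Sigma_N$. Taking logarithms and dividing by $\log N$, the additive constant $\log\kappa(c)$ disappears, so $\log S_N/\log N\to\alpha$. The limit therefore exists, and by the first step it equals $\delta$. The only point needing care is this comparability; the rest is bookkeeping.
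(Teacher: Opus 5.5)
Your proposal is correct and follows the argument the paper leaves implicit (it gives no written proof). If the limit exists it must coincide with the limsup in \eqref{eq:rate}, hence equal $\delta$, and exponentiating $|\mathfrak d_N|=e^{-S_N}$ gives \eqref{eq:regular-decay}. Your extra details are sound and go beyond the paper: the reconciliation of $\log$ with $\log^+$ (positivity of some $S_{N_0}$, not necessarily $S_1$, already forces $L\ge0$), and the check of the regularly varying case via $\Sigma_N\le S_N\le\kappa(c)\Sigma_N$.
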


The consistency exponent also has a direct operational meaning. It is the growth exponent of the negative logarithm of the minimum error probability for deciding which of two cumulative records was presented.

\begin{theorem}[Optimal discrimination of histories]\label{thm:helstrom}
Let $\hist,\histb$ be a decohering pair,
\begin{equation*}
\sum_m b_m=\infty,
\end{equation*}
with $\inf_mF_m\ge c>0$, and suppose the cumulative records
\begin{equation*}
\begin{split}
R_{\hist,N} &= \bigotimes_{m\le N}r(\Omega_m),\\
R_{\histb,N} &= \bigotimes_{m\le N}r(\Theta_m)
\end{split}
\end{equation*}
are presented with equal prior probabilities. The minimum error probability over all measurements deciding between them is \cite{H76}
\begin{equation}\label{eq:helstrom}
\eP^{(N)}_{\mathrm{err}}=\frac12\Bigl(1-\sqrt{1-q_N^2}\Bigr),
\end{equation}
where
\begin{equation}\label{eq:qN-def}
\begin{split}
q_N&=\bigl|
\langle R_{\histb,N},R_{\hist,N}\rangle\bigr|\\
&=\bigl|\mathfrak d_N(\hist,\histb)\bigr|.
\end{split}
\end{equation}
Moreover,
\begin{equation}\label{eq:helstrom-rate}
\delta(\hist,\histb)=\limsup_{N\to\infty}\,\frac{\log\bigl(-\log \eP^{(N)}_{\mathrm{err}}\bigr)}{\log N}\,.
\end{equation}
Consequently, the consistency rate is the polynomial growth exponent of the negative logarithm of the minimum discrimination error for the two cumulative records.
\end{theorem}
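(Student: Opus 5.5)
The proof has two parts: the Helstrom identity \eqref{eq:helstrom} together with \eqref{eq:qN-def}, and the rate formula \eqref{eq:helstrom-rate}.

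\textbf{The Helstrom identity.} I would quote Helstrom's formula for two pure states with equal priors,
\begin{equation*}
\eP_{\mathrm{err}}=\tfrac12\bigl(1-\sqrt{1-|\langle\psi,\phi\rangle|^2}\bigr).
\end{equation*}
I would recall briefly that it comes from the trace norm $\tfrac12\|\,|\psi\rangle\langle\psi|-|\phi\rangle\langle\phi|\,\|_1=\sqrt{1-|\langle\psi,\phi\rangle|^2}$, computed in the two-dimensional span of the two vectors. The overlap of the two cumulative records factorizes over registers, so $q_N=\prod_{m\le N}F_m$. By \eqref{eq:DH}, this equals $|\mathfrak d_N(\hist,\histb)|$.

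\textbf{The rate formula.} The plan is to show that $-\log\eP^{(N)}_{\mathrm{err}}$ and $S_N=-\log q_N$ differ only by a bounded additive amount once $S_N$ is large, and then invoke Proposition~\ref{prop:rate}.

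First rationalize the error probability:
\begin{equation*}
\eP^{(N)}_{\mathrm{err}}=\frac{q_N^2}{2\bigl(1+\sqrt{1-q_N^2}\bigr)}.
\end{equation*}
Since $q_N\in[0,1]$, the denominator lies in $[2,4]$, which gives
\begin{equation*}
\tfrac14 q_N^2\le \eP^{(N)}_{\mathrm{err}}\le \tfrac12 q_N^2.
\end{equation*}
Taking logarithms yields
\begin{equation*}
2S_N+\log 2\le -\log\eP^{(N)}_{\mathrm{err}}\le 2S_N+\log 4.
\end{equation*}

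The hypotheses $\sum_m b_m=\infty$ and $\inf_m F_m\ge c>0$ give $\Sigma_N\to\infty$. By Proposition~\ref{prop:master}(i), $S_N\ge\Sigma_N$, so $S_N\to\infty$. Hence
\begin{equation*}
\log\bigl(-\log\eP^{(N)}_{\mathrm{err}}\bigr)=\log S_N+O(1),
\end{equation*}
since $\log(2S_N+O(1))=\log S_N+\log 2+o(1)$.

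Dividing by $\log N$, the $O(1)$ term disappears in the limsup. Because $S_N\to\infty$, we also have $\log^+S_N=\log S_N$ for large $N$. Proposition~\ref{prop:rate}, whose hypothesis $\inf_mF_m\ge c$ is assumed, then gives
\begin{equation*}
\limsup_{N\to\infty}\frac{\log S_N}{\log N}=\delta(\hist,\histb),
\end{equation*}
which is \eqref{eq:helstrom-rate}.

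\textbf{Main obstacle.} The only delicate point is the bookkeeping that makes the logarithms well defined. We need $-\log\eP^{(N)}_{\mathrm{err}}>1$ for large $N$, so that $\log(-\log\eP^{(N)}_{\mathrm{err}})$ is finite and positive. This is exactly where the decohering hypothesis is used. Without it, $\eP^{(N)}_{\mathrm{err}}$ stays bounded below and the formula degenerates, although it would still give $0=\delta$ after a suitable $\log^+$ convention. The hypothesis $c>0$ is inherited solely through Proposition~\ref{prop:rate}.
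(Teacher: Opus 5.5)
Your proposal is correct and follows the paper's proof essentially step for step: Helstrom's pure-state formula, rationalization to get $q_N^2/4\le \eP^{(N)}_{\mathrm{err}}\le q_N^2/2$ and hence $-\log \eP^{(N)}_{\mathrm{err}}=2S_N+O(1)$, then $S_N\to\infty$ and Proposition~\ref{prop:rate}. One minor point about your ``main obstacle'': since $\eP^{(N)}_{\mathrm{err}}\le 1/2$, the quantity $\log(-\log \eP^{(N)}_{\mathrm{err}})$ is always finite (it is bounded below by $\log\log 2$), so the decohering hypothesis is needed only to write it as $\log S_N+O(1)$, not to make it well defined.
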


\begin{proof}
The cumulative records are pure states. For two pure states $\rho_0,\rho_1$ presented with equal prior probabilities, Helstrom's
bound \cite{H76} gives
\begin{equation*}
\eP_{\mathrm{err}}=\frac12\Big(1-\frac12\|\rho_0-\rho_1\|_1\Big).
\end{equation*}
If the modulus of their overlap is $q$, then
\begin{equation*}
\frac12\|\rho_0-\rho_1\|_1=\sqrt{1-q^2},
\end{equation*}
which proves \eqref{eq:helstrom}. By \eqref{eq:DH}, the modulus of the cumulative record overlap is
\begin{equation*}
q_N=|\mathfrak d_N(\hist,\histb)|.
\end{equation*}

For the rate, rationalizing \eqref{eq:helstrom} gives
\begin{equation*}
\eP^{(N)}_{\mathrm{err}}=\frac{q_N^2}{2\bigl(1+\sqrt{1-q_N^2}\bigr)}\,.
\end{equation*}
Since
\begin{equation*}
1\le 1+\sqrt{1-q_N^2}\le 2,
\end{equation*}
we obtain
\begin{equation*}
\frac{q_N^2}{4}\le \eP^{(N)}_{\mathrm{err}}\le \frac{q_N^2}{2}.
\end{equation*}
Using $q_N=e^{-S_N}$, this is equivalent to 
\begin{equation}\label{eq:helstrom-SN}
-\log \eP^{(N)}_{\mathrm{err}}=2S_N+O(1).
\end{equation}

Because the pair decoheres,
\begin{equation*}
\begin{split}
q_N&\to 0,\\
S_N&\to\infty.
\end{split}
\end{equation*}
It follows from \eqref{eq:helstrom-SN} that
\begin{equation*}
\log\bigl(-\log \eP^{(N)}_{\mathrm{err}}\bigr)=\log S_N+O(1).
\end{equation*}
Dividing by $\log N$ and taking the limsup, Proposition~\ref{prop:rate} gives \eqref{eq:helstrom-rate}.
\end{proof}

\begin{remark}[Per-step optimality]\label{rem:operational}
Theorem~\ref{thm:helstrom} concerns the optimal \emph{joint} measurement on the first $N$ registers. There is a complementary per-step statement. For each pair of per-register record states, the theorem of Fuchs and Caves \cite{FC95} identifies the quantum fidelity $F_m$ with the minimum, over measurements on that register, of the classical fidelity (Bhattacharyya coefficient) of the resulting outcome distributions. Thus $b_m=1-F_m$ is the complement of the optimal classical fidelity. Pairwise, therefore, the affinity entering the classical Kakutani criterion of Remark~\ref{rem:kakutani-classical} can be realized by an optimal per-register measurement.
\end{remark}

\begin{remark}[Three readings of one number]\label{rem:three}
The same quantity $\delta(\hist,\histb)$ has now been identified in three different ways:
\begin{itemize}
\item[(i)] as the sector ultrametric $\tilde d$ of the cumulative record classes, by \eqref{eq:delta-is-td};

\item[(ii)] as the polynomial growth exponent of the negative logarithm of normalized interference, by Proposition~\ref{prop:rate};

\item[(iii)] as the polynomial growth exponent of the negative logarithm of the minimum record discrimination error, by Theorem~\ref{thm:helstrom}.
\end{itemize}
A sector distance, a decoherence rate, and a hypothesis testing rate are therefore three manifestations of the same asymptotic quantity. This is the precise sense in which the geometry of Section~\ref{sec:ultrametric} has direct operational content.
\end{remark}

\section{Classical separation profiles: a Lyapunov dictionary}\label{sec:lyapunov}

In the coherent model of Example~\ref{ex:records}(ii), the two-sided estimate \eqref{eq:coherent-defect} transfers asymptotic classical separation directly to the Bures-defect sequence and hence to the consistency rate. This yields a simple dictionary between classical separation profiles and the ultrametric strata of history space.

\begin{proposition}[Separation-to-rate dictionary]\label{prop:dictionary}
Let $d_m=d(\Omega_m,\Theta_m)$ and suppose $S$ is fixed. If $d_m^2\asymp m^{-s}$ then
\begin{equation}\label{eq:dictionary}
\begin{split}
s>1&:\ \text{merger};\\
s=1&:\ \delta=0,\; \sum b_m=\infty\ \text{(marginal)};\\
0\le s<1&:\ \delta=1-s.
\end{split}
\end{equation}
\end{proposition}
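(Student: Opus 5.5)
The plan is to reduce everything to the defect sequence via the transfer estimate \eqref{eq:coherent-defect-sharp}, and then read off summability and the convergence exponent of a pure power sequence. Since $d_m^2\asymp m^{-s}$ with $s\ge0$, the quantity $S d_m^2$ is bounded above by a constant $C_S$ depending only on the fixed $S$ and the implied constants. Hence $\min(1,Sd_m^2)$ and $Sd_m^2$ are comparable: $\min(1,x)\ge x/\max(1,C_S)$ for $0\le x\le C_S$. Combining with \eqref{eq:coherent-defect-sharp}, we obtain
\begin{equation*}
c_1\, m^{-s}\le b_m\le c_2\, m^{-s}
\end{equation*}
for all $m$, with constants $0<c_1\le c_2$ depending on $S$ and the implied constants in $d_m^2\asymp m^{-s}$. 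One point needs care here: $d_m\le\pi$, so if some $d_m=\pi$ then $F_m=0$. That does not affect merger or $\delta$, which depend only on $b_m$, and the two-sided bound above still holds.

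The three cases then follow from comparison. For $s>1$, $\sum_m b_m\le c_2\sum_m m^{-s}<\infty$, so the pair merges in the sense of \eqref{eq:merger}, i.e. weak equivalence by Theorem~\ref{thm:dichotomy}(i). For the exponent, multiplication by positive constants does not change $\cE$, as already used in the proof of Theorem~\ref{thm:ultrametric}, and $\cE$ is monotone under termwise domination. So $\delta=\cE(\{m^{-s}\})$, which is the infimum of $p\ge0$ with $\sum_m m^{-s-p}<\infty$, namely $\max(0,1-s)$. This gives $\delta=1-s$ for $0\le s<1$ (the series converges iff $p>1-s$) and $\delta=0$ for $s\ge1$. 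For $s=1$, the lower bound gives $\sum_m b_m\ge c_1\sum_m m^{-1}=\infty$, so the pair lies in distinct sectors with $\delta=0$, which is the marginal stratum of Proposition~\ref{prop:rate}.

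As a consistency check, I would optionally verify via partial sums: $\Sigma_N\asymp N^{1-s}$ for $s<1$ and $\Sigma_N\asymp\log N$ for $s=1$. By the partial-sum characterization used in Proposition~\ref{prop:rate}, this gives the same exponents.

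The only genuine obstacle is the first step: turning the truncated quantity $\min(1,Sd_m^2)$ into a pure power. This works precisely because $S$ is fixed and $d_m^2$ is bounded. For $s=0$, $Sd_m^2$ may exceed $1$, but then $\min(1,Sd_m^2)\asymp1=m^{0}$ anyway, so the comparison still holds with $S$-dependent constants. Everything else is routine comparison with $p$-series.
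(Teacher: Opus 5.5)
Your proposal is correct and follows the paper's route. The two-sided coherent-state estimate reduces $b_m$ to $m^{-s}$, after which merger and $\cE(b)=\max(0,1-s)$ follow by $p$-series comparison. The only difference is cosmetic: you absorb the truncation uniformly via $\min(1,x)\asymp x$ on the bounded range $0\le x\le S\pi^2$, which covers $s=0$ at the same stroke. The paper instead treats $s>0$ (truncation eventually inactive) and $s=0$ (uniform lower bound $b_m\ge C_0$) separately.
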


\begin{proof}
By the two-sided coherent state estimate \eqref{eq:coherent-defect}, there exist constants $c_*,c^*>0$, independent of $m$, such that
\begin{equation}\label{eq:dictionary-proof-bound}
c_*\min(1,\,Sd_m^2)\le b_m\le c^*\min(1,\,Sd_m^2).
\end{equation}
Thus the classical separation profile determines, up to multiplicative constants, the Bures defect sequence that enters both the merger criterion and the definition
of the consistency rate.

Suppose first that $s>0$. The assumption
\begin{equation*}
d_m^2\asymp m^{-s}
\end{equation*}
means that there exist constants $0<c_1\le c_2<\infty$ and an index $m_0$ such that
\begin{equation}\label{eq:dm-power-bound}
c_1m^{-s}\le d_m^2\le c_2m^{-s},\;\text{for } m\ge m_0.
\end{equation}
Since $s>0$, we have $d_m^2\to0$. Because $S$ is fixed, there is therefore $m_1\ge m_0$ such that
\begin{equation*}
Sd_m^2<1,\;\text{ for } m\ge m_1.
\end{equation*}
Consequently the truncation in \eqref{eq:dictionary-proof-bound} is eventually inactive, and for $m\ge m_1$,
\begin{equation*}
c_*S d_m^2\le b_m\le c^*S d_m^2.
\end{equation*}
Combining this with \eqref{eq:dm-power-bound}, we obtain constants $0<C_1\le C_2<\infty$ for which
\begin{equation}\label{eq:bm-power-bound}
C_1m^{-s}\le b_m\le C_2m^{-s},\;\text{ for } m\ge m_1.
\end{equation}
Thus
\begin{equation*}
b_m\asymp m^{-s}.
\end{equation*}

By the merger criterion \eqref{eq:merger}, the histories merge if and only if $\sum_m b_m<\infty$. By \eqref{eq:bm-power-bound}, this series has the same convergence behavior as
the $p$-series $\sum_m m^{-s}$. Hence, if $s>1$,
\begin{equation*}
\sum_m b_m<\infty,
\end{equation*}
and the two histories belong to the same merger class.

If $0<s\le1$, the defect series diverges, so the histories lie in distinct merger classes. Their consistency rate is $\delta(\hist,\histb)=\cE(b)$. For $s>0$, \eqref{eq:bm-power-bound} gives
\begin{equation*}
b_m m^{-p}\asymp m^{-(s+p)},
\end{equation*}
and therefore
\begin{equation*}
\sum_m b_m m^{-p}<\infty\;\text{ if and only if }s+p>1.
\end{equation*}
It follows that
\begin{equation}\label{eq:dictionary-exponent}
\begin{split}
\cE(b)&=\inf\{p\ge0:s+p>1\}\\
&=\max\{0,\;1-s\}.
\end{split}
\end{equation}

In particular, when $s=1$,
\begin{equation*}
\begin{split}
\sum_m b_m&=\infty,\\
\sum_m b_m m^{-p}&<\infty,
\end{split}
\end{equation*}
for every $p>0$, and so
\begin{equation*}
\delta(\hist,\histb)=0.
\end{equation*}
This is the marginal stratum: the histories belong to distinct merger classes even though their consistency rate vanishes. 

For $0<s<1$, \eqref{eq:dictionary-exponent} instead gives
\begin{equation*}
\delta(\hist,\histb)=1-s.
\end{equation*}

It remains to consider $s=0$. In this case $d_m^2\asymp1$, so there exist $c_1>0$ and $m_0$ such that
\begin{equation*}
d_m^2\ge c_1,\;\text{ for }\; m\ge m_0.
\end{equation*}
Equation \eqref{eq:dictionary-proof-bound} then yields the uniform lower bound: for $m\ge m_0$
\begin{equation*}
\begin{split}
b_m&\ge c_*\min(1,\,Sc_1)\\
&=:C_0.
\end{split}
\end{equation*}
Since also $b_m\le1$, we have
\begin{equation*}
C_0\le b_m\le1,
\end{equation*}
eventually. Hence $\sum_m b_m m^{-p}$ has the same convergence threshold as $\sum_m m^{-p}$: it diverges for $p\le1$ and converges for $p>1$. Therefore
\begin{equation*}
\delta(\hist,\histb)=1.
\end{equation*}
This proves \eqref{eq:dictionary}.
\end{proof}

\begin{proposition}[Mean defect criterion]\label{prop:mean}
If the pair has positive asymptotic mean defect,
\begin{equation}\label{eq:mean}
\liminf_{N\to\infty}\,\frac1N\sum_{m\le N}b_m\;>\;0,
\end{equation}
then $\delta(\hist,\histb)=1$.
\end{proposition}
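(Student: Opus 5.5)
The plan is to bound $\delta=\cE(b)$ from both sides. The upper bound $\delta\le1$ is already available: since $0\le b_m\le1$, the series $\sum_m b_m m^{-p}$ converges for every $p>1$. This is the same remark made in the proof of Proposition~\ref{prop:rate}, and it needs no hypothesis $\inf_m F_m>0$. It therefore remains to show that $\sum_m b_m m^{-p}=\infty$ for every $p<1$, which gives $\cE(b)\ge1$.

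For the lower bound I would work directly with the partial sums $\Sigma_N=\sum_{m\le N}b_m$ rather than go through the master estimate. This keeps the argument free of any assumption excluding perfect records. Hypothesis \eqref{eq:mean} provides $\eta>0$ and $N_0$ such that $\Sigma_N\ge\eta N$ for all $N\ge N_0$.

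The quickest route is then the partial-sum characterization of the convergence exponent, Equation~(13) of \cite{LesTP}, already used in Proposition~\ref{prop:rate}:
\begin{equation*}
\cE(b)=\limsup_{N\to\infty}\frac{\log^+\Sigma_N}{\log N}.
\end{equation*}
Since $\log\Sigma_N\ge\log\eta+\log N$ for large $N$, the limsup is at least $1$. Combined with the upper bound, this gives $\delta=1$.

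If a self-contained argument is preferred, I would use Abel summation. For $0\le p<1$,
\begin{equation*}
\sum_{m\le N}b_m m^{-p}=\Sigma_N N^{-p}+\sum_{m<N}\Sigma_m\bigl(m^{-p}-(m+1)^{-p}\bigr)\ge\Sigma_N N^{-p}\ge\eta N^{1-p}\to\infty,
\end{equation*}
where the middle sum is nonnegative because $m^{-p}$ is nonincreasing. A cruder alternative is also available: on a dyadic block $(N,2N]$ with $N\ge N_0$, we have $\sum_{N<m\le 2N}b_m\ge\Sigma_{2N}-\Sigma_N$. This difference need not be large for every $N$, so the Abel form is the cleaner choice.

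I do not expect a real obstacle. The only point needing care is that \eqref{eq:mean} uses a liminf, so $\Sigma_N\ge\eta N$ holds for all sufficiently large $N$ and not merely along a subsequence. In fact, even a limsup hypothesis would already suffice for the limsup formula.
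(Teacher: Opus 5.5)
Your proposal is correct and follows the paper's proof. The paper likewise gets $\Sigma_N\ge cN$ from the liminf hypothesis and $\Sigma_N\le N$ from $b_m\le 1$, then applies the partial-sum characterization (Eq.~(13) of \cite{LesTP}) to conclude $\delta=\cE(b)=1$. Your Abel-summation variant is a valid self-contained substitute for that citation in the lower bound, and your remark that a $\limsup$ hypothesis would already suffice is also correct.
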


\begin{proof}
By \eqref{eq:mean}, $\Sigma_N\ge cN$, for some $c>0$ and all large $N$. Consequently,
\begin{equation*}
\limsup_N\,\frac{\log^+\Sigma_N}{\log N}\ge1.
\end{equation*} 
Since $b_m\le1$ gives $\Sigma_N\le N$, the $\limsup$ is also $\le1$. By the partial sum characterization of the convergence exponent, Eq.~(13) in \cite{LesTP},  $\delta=\cE(b)=1$.
\end{proof}

A positive Lyapunov exponent does not, by itself, imply the hypothesis \eqref{eq:mean}: local exponential instability is a statement about initially nearby trajectories and does not control long-time averages. On a compact phase space, moreover, recurrences can return trajectory pairs arbitrarily close. The monotone law \(d_m\asymp(1,d_0e^{\lambda m})\), sometimes invoked in this context, is therefore a stronger pre-recurrence statement appropriate only on Ehrenfest time scales. Condition \eqref{eq:mean} replaces it by a genuine long-time criterion and cleanly separates the metric conclusion from the dynamical mechanism required to establish it.

\begin{remark}[Chaos and the maximal stratum]\label{rem:chaos}
In a sufficiently mixing chaotic model, one expects typical trajectory pairs to spend a positive fraction of time at order-one separation and hence to satisfy the mean defect condition \eqref{eq:mean}. For such pairs, $\delta=1$: consistency is reached at the maximal rate allowed by the formalism, which is the quantitative content of ``chaos produces classicality'' in this setting, cf.\ \cite{PCRR26,WS25}. Verifying \eqref{eq:mean} in a concrete model (say, the disordered kicked top) is a dynamical rather than metric task, and we do not undertake it here.
\end{remark}

Which of the separation profiles in Proposition~\ref{prop:dictionary} arise naturally from autonomous quantum dynamics is a separate dynamical question. The proposition should therefore be read as a dictionary rather than a genericity statement: once the dynamics produces a separation profile of the stated form, the coherent state estimate determines the corresponding consistency rate stratum. We return to dynamical realizability in Section~\ref{sec:discussion}.

\section{The finite size counterpart}\label{sec:pcrr}

We now make precise the relation to \cite{PCRR26}.  For $N$ spin $1/2$ degrees of freedom, the many-body Hilbert space decomposes under the commuting actions of collective spin rotations and particle permutations as
\begin{equation}\label{eq:spin permutation-decomp}
(\bC^2)^{\otimes N}=\bigoplus_S \cS_S\otimes\cP_S.
\end{equation}
Here $\cS_S\simeq\bC^{2S+1}$ carries the irreducible spin $S$ representation of $SU(2)$, while $\cP_S$ is its multiplicity space, carrying the remaining permutation degrees of freedom.  Thus $\cS_S$ describes the collective spin, whereas $\cP_S$ distinguishes the microscopically different ways in which the same collective spin representation is embedded in the $N$-spin Hilbert space.

Restricting to a fixed total-spin sector, and suppressing the subscript $S$, one therefore has
\begin{equation}\label{eq:SP-factorization}
\mathcal H_S=\cS\otimes\cP.
\end{equation}
The physical roles of the two factors in \cite{PCRR26} are quite different. The collective factor $\cS$ carries the quasiclassical degree of freedom: a point $\Omega\in\bS^2$ labels a spin coherent state $\ket{\Omega}_{\cS}$.  The multiplicity factor $\cP$, by contrast, contains microscopic degrees of freedom invisible to the collective spin coordinate.  When permutation symmetry is weakly broken, these degrees of freedom act as an internal bath and can retain records of the trajectory
followed by the collective spin.

For
\begin{equation*}
S=\frac{N-k}{2}\,,
\end{equation*}
with fixed even $k$, the multiplicity space has dimension $\dim \cP=d_\cP$ with
\begin{equation}\label{eq:P-dimension}
\begin{split}
d_\cP &= \binom{N}{k/2}-\binom{N}{k/2-1}\\
& \sim \frac{N^{k/2}}{(k/2)!}\,,
\end{split}
\end{equation}
whereas $\dim\cS=2S+1=N-k+1$.  A branch in the semiclassical regime therefore has the schematic form
\begin{equation*}
\ket{\phi_{\hist}} = \ket{\Omega_M}_{\cS}\otimes
\ket{R_{\hist}}_{\cP},
\end{equation*}
as in \eqref{eq:PCRR-branch}: the first factor specifies the final quasiclassical collective state, while the second carries the microscopic record of the history leading to it.  The weak symmetry breaking disorder couples these two sectors and produces the record fidelity estimate \eqref{eq:PCRR}.

The correspondence with our model is therefore structural rather than literal. The finite permutation bath of \cite{PCRR26} need not decompose into one fresh tensor factor per branching time.  What matters for the present theory is the accumulated record fidelity: suppression of interference is controlled by a sum of nonnegative local-in-time contributions to a squared classical separation functional. Our fresh register construction is an idealized product realization of this mechanism, in which those contributions are assigned to independent registers. Passing to infinitely many registers then allows the Kakutani - von Neumann sector alternative to replace finite dimensional near-orthogonality by exact asymptotic superselection.

This distinction raises a natural finite size question. In the infinite tensor-product model, new record capacity is supplied indefinitely as fresh registers are added. In the model of \cite{PCRR26}, by contrast, all records must coexist in the same finite dimensional multiplicity space $\mathcal P$. If $K$ different histories are to define approximately decoherent alternatives, their corresponding record vectors
\begin{equation*}
R_1,\ldots,R_K\in\mathcal P
\end{equation*}
should have small pairwise fidelities. Thus the question of how many histories a finite internal bath can distinguish becomes a geometric packing problem: how many unit vectors can be placed in $\bC^d$ while keeping every pairwise squared overlap below a prescribed tolerance $\varepsilon$?

There are two qualitatively different regimes. If the records are required to be nearly orthogonal on the scale $\varepsilon<1/d$, finite dimension imposes a strong upper bound on their number. At a fixed nonzero fidelity tolerance, however, exponentially many approximately distinguishable vectors can coexist. The following elementary proposition makes this contrast precise.

\begin{proposition}[finite dimensional record packing]\label{prop:packing}
Let $R_1,\dots,R_K$ be unit vectors in $\bC^d$, $d\ge2$, satisfying
\begin{equation*}
\max_{i\neq j}|\langle R_i,R_j\rangle|^2\le\varepsilon.
\end{equation*}
Then:
\begin{itemize}
\item[(i)] If $\varepsilon<1/d$, the Welch bound \cite{W74} gives
\begin{equation}\label{eq:welch}
K\le\frac{d(1-\varepsilon)}{1-d\varepsilon}\,.
\end{equation}
In particular, exact orthogonality implies $K\le d$.

\item[(ii)] For every fixed $\varepsilon\in(0,1)$ and every integer
\begin{equation}\label{eq:random-packing-K}
K\le (1-\varepsilon)^{-(d-1)/2},
\end{equation}
there exists a set of $K$ unit vectors with all pairwise squared fidelities at most $\varepsilon$. In particular, one may take
\begin{equation*}
K= \left\lfloor(1-\varepsilon)^{-(d-1)/2}\right\rfloor,
\end{equation*}
so the achievable packing size is exponential in $d$ at fixed threshold.
\end{itemize}
\end{proposition}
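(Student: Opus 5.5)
For part (i) I will use the standard frame-potential argument. Set $G=(\langle R_i,R_j\rangle)_{i,j}$, the $K\times K$ Gram matrix. It is positive semidefinite, has rank at most $d$, and satisfies $\tr G=K$. Cauchy--Schwarz applied to its at most $d$ nonzero eigenvalues gives
\begin{equation*}
\tr G^2\ge \frac{(\tr G)^2}{d}=\frac{K^2}{d}.
\end{equation*}
On the other hand,
\begin{equation*}
\tr G^2=\sum_{i,j}|\langle R_i,R_j\rangle|^2\le K+K(K-1)\varepsilon.
\end{equation*}
Combining the two and dividing by $K$ yields $K/d\le 1+(K-1)\varepsilon$, that is, $K(1-d\varepsilon)\le d(1-\varepsilon)$. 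When $\varepsilon<1/d$ the coefficient $1-d\varepsilon$ is positive, so we may divide by it to obtain \eqref{eq:welch}. Setting $\varepsilon=0$ gives $K\le d$.

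For part (ii) I will use a random (or greedy) packing argument based on the cap measure on the unit sphere of $\bC^d$. The key fact is that for a Haar-random unit vector $u$ and a fixed unit vector $v$, the squared overlap $|\langle u,v\rangle|^2$ has the Beta$(1,d-1)$ law. Hence
\begin{equation*}
\Pr\bigl(|\langle u,v\rangle|^2>\varepsilon\bigr)=(1-\varepsilon)^{d-1}.
\end{equation*}
I would justify this either via the $2d$-real-Gaussian representation, where $|u_1|^2$ is a ratio of independent exponential and Gamma variables, or from the known volume of a complex spherical cap.

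The greedy version is the cleanest route. Suppose $j<K$ vectors are already chosen. The set of unit vectors that are bad for at least one of them, meaning squared overlap exceeding $\varepsilon$, has measure at most $j(1-\varepsilon)^{d-1}$. This is strictly less than $1$ whenever $j<(1-\varepsilon)^{-(d-1)}$, so a further vector can be added. This actually yields the stronger bound $K\le (1-\varepsilon)^{-(d-1)}$, which implies the stated $K\le(1-\varepsilon)^{-(d-1)/2}$.

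If one instead wants the probabilistic construction with $K$ i.i.d. vectors and a union bound over pairs, the expected number of bad pairs is
\begin{equation*}
\binom K2(1-\varepsilon)^{d-1}<\frac{K^2}{2}(1-\varepsilon)^{d-1}\le\frac12
\end{equation*}
under \eqref{eq:random-packing-K}. Some realization therefore has no bad pairs, and this matches the square-root exponent in the statement exactly. I will present this union-bound version, since it produces the stated constant directly.

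The only genuinely non-routine step is the exact cap probability $(1-\varepsilon)^{d-1}$. I will derive it by writing $u=g/\|g\|$ with $g$ standard complex Gaussian in $\bC^d$. Then $|u_1|^2=E_1/(E_1+\cdots+E_d)$ with $E_i$ i.i.d. exponential, which is Beta$(1,d-1)$, and its tail is $(1-\varepsilon)^{d-1}$. The remaining points are immediate: the floor choice of $K$ satisfies \eqref{eq:random-packing-K}, and $(1-\varepsilon)^{-(d-1)/2}$ is exponential in $d$.
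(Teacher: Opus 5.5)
Your proof is correct and follows the paper's argument essentially verbatim: the Gram-matrix/Cauchy--Schwarz bound $K^2/d\le \tr G^2\le K+K(K-1)\varepsilon$ for (i), and, for (ii), i.i.d.\ Haar vectors with the exact cap tail $(1-\varepsilon)^{d-1}$ followed by a union bound over pairs. What you add is a derivation of the Beta$(1,d-1)$ law via the Gaussian/exponential representation, where the paper simply quotes the density $(d-1)(1-t)^{d-2}$, and a greedy variant that correctly gives the stronger admissible range $K\le(1-\varepsilon)^{-(d-1)}$.
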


\begin{proof}
For (i), let $G$ be the Gram matrix,
\begin{equation*}
G_{ij}=\langle R_i,R_j\rangle.
\end{equation*}
Since $G\ge0$, $\operatorname{rank}G\le d$, and $\tr G=K$, the Cauchy - Schwarz inequality applied to the nonzero eigenvalues of $G$ gives
\begin{equation*}
\tr G^2 \ge \frac{(\tr G)^2}{\operatorname{rank}G} \ge \frac{K^2}{d}.
\end{equation*}
On the other hand, the assumed fidelity bound gives
\begin{equation*}
\begin{split}
\tr G^2 &=K+\sum_{i\neq j}|G_{ij}|^2\\
&\le K+K(K-1)\varepsilon.
\end{split}
\end{equation*}
Combining the two inequalities yields, for $K\ge2$,
\begin{equation*}
\frac{K-d}{d(K-1)}\le\varepsilon,
\end{equation*}
which, for $\varepsilon<1/d$, rearranges to \eqref{eq:welch}.  Thus sufficiently stringent pairwise distinguishability forces the number of records to remain of order $d$; at $\varepsilon=0$ this reduces to the familiar statement that $\bC^d$ contains at most $d$ mutually orthogonal vectors.

For (ii), draw $R_1,\ldots,R_K$ independently from Haar measure on the unit sphere of $\bC^d$. For fixed $u$ and Haar random $v$, the random variable
\begin{equation*}
t=|\langle u,v\rangle|^2
\end{equation*}
has density $(d-1)(1-t)^{d-2}$ and hence
\begin{equation*}
\eP(t>\varepsilon)=(1-\varepsilon)^{d-1}.
\end{equation*}
There are $K(K-1)/2$ unordered pairs, so the union bound gives
\begin{equation*}
\begin{split}
\eP\big(\max_{i\neq j}|\langle R_i,R_j\rangle|^2>\varepsilon\big)&\le\frac{K(K-1)}{2}(1-\varepsilon)^{d-1}\\
&<1
\end{split}
\end{equation*}
under \eqref{eq:random-packing-K}, since $K(K-1)<K^2\le(1-\varepsilon)^{-(d-1)}$. Hence an admissible configuration exists.
\end{proof}

The two parts of Proposition~\ref{prop:packing} have deliberately different logical meanings. Part~(i) is an \emph{upper bound}: when records are required to be extremely close to orthogonal, finite Hilbert-space dimension places a genuine restriction on how many can coexist. Part~(ii) is an \emph{existence result}: at any fixed nonzero fidelity tolerance, finite dimension by itself permits exponentially many pairwise distinguishable records.  It is therefore a lower bound on achievable record capacity, not a capacity ceiling.  The crossover between these statements is important: ``approximately orthogonal'' has very different dimensional consequences
depending on how the allowed fidelity scales with $d$.

This distinction can be translated into branching depth.  Suppose that each branching event has $A$ alternatives. After $M$ steps there are $K=A^M$ formal histories, and representing all of them as simultaneously distinguishable alternatives requires a corresponding family of record vectors in $\mathcal P$.  In the strict near-orthogonality regime $\varepsilon\le c/d_{\mathcal P}$ with $c<1$ fixed, the Welch bound \eqref{eq:welch} implies
\begin{equation*}
\begin{split}
A^M&\le\frac{d_{\mathcal P}}{1-c}\\
&=O(d_{\mathcal P}),
\end{split}
\end{equation*}
and hence
\begin{equation*}
M=O(\log_A d_{\mathcal P}).
\end{equation*}
For the PCRR sector with fixed $k$, $d_{\mathcal P}\sim N^{k/2}/(k/2)!$, so this stringent notion of record distinguishability permits only logarithmic branching depth in $N$.

The situation is very different if a fixed nonzero fidelity $\varepsilon\in(0,1)$ is accepted.  Part~(ii) shows that finite dimension alone does not obstruct packings satisfying
\begin{equation}\label{eq:packing-depth}
M\lesssim\frac{d_{\mathcal P}-1}{2\log A}\,\log\frac{1}{1-\varepsilon}\,.
\end{equation}
Thus a branching depth linear in the bath dimension (and therefore polynomial in $N$ for fixed $k$) is compatible with the dimensional constraint. Equation~\eqref{eq:packing-depth} should be read only as a \emph{dimensional non-obstruction}.  It does not say that the dynamics of \cite{PCRR26} actually realizes such an efficient packing, nor does it imply that record storage must fail when this scale is exceeded. The true finite size record capacity is dynamical: it depends on how records are generated, whether they recur or are overwritten, and how much pairwise fidelity is physically tolerable.

This finite size comparison also clarifies the role of the infinite tensor product in the present paper.  Fresh registers supply new record capacity indefinitely, so no fixed finite dimensional packing constraint arises. The infinite tensor product should therefore not be viewed as a literal large $N$ limit of the permutation sector $\cP$.  Rather, it is an asymptotic idealization of continued record accumulation in which the finite dimensional question ``how many nearly orthogonal records fit in the bath?'' is replaced by the Kakutani - von Neumann question ``do the accumulated record sequences belong to the same or to disjoint infinite product
sectors?''  It is this replacement that turns approximate finite size distinguishability into exact asymptotic superselection.

\section{Discussion}\label{sec:discussion}

The main result of this paper is a rate geometry for histories that accumulate quantum records. Once normalized interference factorizes into record fidelities, the infinite product alternative converts a gradual finite scale phenomenon into a sharp asymptotic one: summable Bures defect makes two record sequences weakly equivalent, whereas nonsummable defect places them in orthogonal sectors. The binary sector alternative alone, however, discards how quickly this separation occurs. The consistency rate $\delta$ retains this information and thereby refines asymptotic superselection into a geometry of histories.

The same number has three complementary interpretations. It is the gauge invariant infinite product sector pseudo-distance inherited from \cite{LesTP}; it is the polynomial growth exponent of the negative logarithm of normalized interference; and it is the corresponding exponent for the negative logarithm of the minimum Helstrom discrimination error. Thus the ultrametric structure is not merely a convenient topology imposed on histories: it is read directly from the rate at which their accumulated records become operationally distinguishable.

The coherent state example gives this abstract statement a direct dynamical interpretation. The two sided estimate \eqref{eq:coherent-defect} transfers classical separation of trajectories into quantum separation of their records. Consequently, asymptotic classical separation profiles determine the consistency rate strata. Power-law profiles $d_m^2\asymp m^{-s}$ give merger for $s>1$, the marginal regime $\delta=0$ at $s=1$, and $\delta=1-s$ for $0\le s<1$, while a positive long-time mean defect gives the maximal rate $\delta=1$. The marginal case is particularly instructive: distinct histories can belong to different superselection sectors while having zero consistency rate. Sector separation and rate separation are therefore genuinely different levels of asymptotic information.

This perspective complements existing work on the definition and emergence of branches \cite{Rie17,Wei22,TM25,RieQV25}. Those programs ask when a decomposition of a many-body wavefunction deserves to be called branching. Here we ask a different question: once record bearing branches are present, what asymptotic geometry do they carry? The answer is hierarchical: consistency rate balls give nested coarse-grainings of history space. The resulting hierarchy is insensitive to finite
modifications of a history: it measures not when two histories first separate, but the asymptotic rate at which their records become distinguishable.

The fresh-register model is deliberately idealized. Its tensor factor enumeration is physical input, not something derived from locality, and no finite record capacity is ever exhausted. The model of \cite{PCRR26} is complementary: its record bearing permutation sector emerges dynamically inside a closed many-body system, but it is finite dimensional and does not literally supply one independent register per time step. Equation~\eqref{eq:PCRR} shows that the two constructions nevertheless
share the same accumulated fidelity structure.

Section~\ref{sec:pcrr} makes the finite size distinction more precise. In a finite record space, the simultaneous storage of many distinguishable histories is a packing problem. The Welch bound limits the number of records in the near-orthogonal regime, whereas at any fixed nonzero fidelity tolerance exponentially many record vectors can coexist as the bath dimension grows. Thus finite dimension does not impose a single universal ``branching capacity'': the answer depends essentially on the required degree of record distinguishability. The infinite tensor product removes this packing constraint by construction, in the sense made precise at the end of
Section~\ref{sec:pcrr}.

An important dynamical question remains open. Proposition~\ref{prop:dictionary} is a dictionary rather than a genericity statement: it determines the consistency rate stratum once a prescribed separation profile is produced, but does not determine which profiles arise naturally from autonomous quantum dynamics. It would be particularly interesting to identify many-body systems in which the merger, marginal, intermediate rate, and maximal rate regimes emerge without explicit time dependent tuning of the system-record coupling. The isolated many-body mechanism of \cite{PCRR26} provides one natural setting in which to investigate this question. More generally, slow modes, critical or near-integrable dynamics, and dynamically generated internal record sectors may provide other realizations. Understanding which rate strata are dynamically accessible, and which are generic or exceptional, is a natural direction for future work.

A further extension is operator algebraic. Each record class determines the same type of infinite tensor product representation studied in \cite{LesTP}, so the associated factor and modular structures can in principle be transferred to history space. We have not developed that direction here. The more immediate point is operational: once histories leave cumulative quantum records, the rate at which those records become distinguishable simultaneously measures the suppression of interference and the separation of infinite product sectors. The ultrametric is therefore not an additional structure placed on the space of histories; it is the asymptotic geometry already encoded in the records themselves.

\appendix
\section{The cosine-power estimate}\label{app:cosine-power}

For completeness, we prove the elementary estimate used in Lemma~\ref{lem:cosine-power}:
\begin{equation}\label{eq:cospx-app}
(1-e^{-1})\Big(1,\;\frac{px^2}{2}\Big) \le 1-(\cos x)^p \le \Big(1,\;\frac{px^2}{2}\Big),
\end{equation}
for $|x|\le\pi/2,\,p\ge1$. Both constants are optimal.

\begin{proof}
We first prove the upper bound. Since $0\le\cos x\le1$ for $|x|\le\pi/2$, convexity of $t\mapsto t^p$, $p\ge1$, gives
\begin{equation*}
t^p\ge 1-p(1-t),\qquad 0\le t\le1.
\end{equation*}
Taking $t=\cos x$ and using $1-\cos x\le x^2/2$ yields
\begin{equation*}
1-(\cos x)^p \le p(1-\cos x)\le \frac{px^2}{2}.
\end{equation*}
Together with the trivial estimate $1-(\cos x)^p\le1$, this proves
\begin{equation*}
1-(\cos x)^p \le\Big(1,\;\frac{px^2}{2}\Big).
\end{equation*}

For the lower bound, we use
\begin{equation*}
\cos x\le e^{-x^2/2},
\qquad |x|\le\frac{\pi}{2}.
\end{equation*}
Indeed, by symmetry it suffices to take $x\ge0$. If
\begin{equation*}
f(x)=\log(\cos x)+\frac{x^2}{2},
\end{equation*}
then
\begin{equation*}
f'(x)=x-\tan x\le0,
\end{equation*}
because $\tan x\ge x$ for $x\ge0$, while $f(0)=0$. Hence $f(x)\le0$.

It follows that
\begin{equation*}
(\cos x)^p\le e^{-px^2/2},
\end{equation*}
and therefore
\begin{equation*}
1-(\cos x)^p\ge 1-e^{-px^2/2}.
\end{equation*}
Set
\begin{equation*}
v=\frac{px^2}{2}.
\end{equation*}
It remains to determine the largest constant $c$ such that
\begin{equation*}
1-e^{-v}\ge c(1,\,v),
\qquad v\ge0.
\end{equation*}
For $0<v\le1$, define
\begin{equation*}
g(v)=\frac{1-e^{-v}}{v}.
\end{equation*}
Then
\begin{equation*}
\begin{split}
g'(v)&=\frac{e^{-v}(1+v)-1}{v^2}\\
&\le0,
\end{split}
\end{equation*}
since $e^v\ge1+v$. Thus
\begin{equation*}
\frac{1-e^{-v}}{v}\ge1-e^{-1},
\qquad 0<v\le1.
\end{equation*}
For $v\ge1$,
\begin{equation*}
1-e^{-v}\ge1-e^{-1}.
\end{equation*}
Combining the two regimes gives
\begin{equation*}
1-e^{-v}\ge(1-e^{-1})(1,\,v),
\end{equation*}
and hence the lower bound in \eqref{eq:cospx-app}.

It remains to verify optimality. For the upper bound, as $x\to0$ with
$p$ fixed,
\begin{equation*}
1-(\cos x)^p=\frac{px^2}{2}+O(x^4),
\end{equation*}
so the coefficient multiplying $px^2/2$ cannot be decreased.

For the lower bound, let $x\to0$ and choose
\begin{equation*}
p=\frac{2}{x^2}.
\end{equation*}
Then $px^2/2=1$, while
\begin{equation*}
(\cos x)^{2/x^2}=\exp\Big(\frac{2\log(\cos x)}{x^2}\Big)\to e^{-1},
\end{equation*}
because
\begin{equation*}
\log(\cos x)=-\frac{x^2}{2}+O(x^4).
\end{equation*}
Consequently,
\begin{equation*}
\frac{1-(\cos x)^p}{\min(1,\,px^2/2)}\to 1-e^{-1},
\end{equation*}
so the lower constant is optimal as well.
\end{proof}

\begin{remark}
The normalization used in the spin coherent application is $a=S d^2$, rather than $v=px^2/2$.  With $p=2S$ and $x=d/2$, one has $v=a/4$.  The sharper estimate in
\eqref{eq:coherent-defect-sharp},
\begin{equation*}
(1-e^{-1/4})(1,a)\le 1-\cos^{2S}(d/2)\le(1,a),
\end{equation*}
is therefore stated and derived directly in Section~\ref{sec:model}.
\end{remark}



\begin{thebibliography}{99}

\bibitem{AW68} Araki, H., and Woods, E. J.: A classification of factors,
\textit{Publ. Res. Inst. Math. Sci.} \textbf{4}, 51 - 130 (1968).

\bibitem{Bu69} Bures, D.: An extension of Kakutani's theorem on infinite product measures to the
tensor product of semifinite $w^\ast$-algebras, \textit{Trans. Amer. Math. Soc.} \textbf{135},
199 - 212 (1969).

\bibitem{Cr97} Craig, D. A.: The geometry of consistency: decohering histories in generalized
quantum theory, arXiv:quant-ph/9704031 (1997).

\bibitem{DH92} Dowker, H. F., and Halliwell, J. J.: Quantum mechanics of history: the
decoherence functional in quantum mechanics, \textit{Phys. Rev. D} \textbf{46}, 1580 - 1609
(1992).

\bibitem{FC95} Fuchs, C. A., and Caves, C. M.: Mathematical techniques for quantum communication
theory, \textit{Open Syst. Inf. Dyn.} \textbf{3}, 345 - 356 (1995).

\bibitem{GMH90} Gell-Mann, M., and Hartle, J. B.: Quantum mechanics in the light of quantum
cosmology, in \textit{Complexity, Entropy, and the Physics of Information}, SFI Studies in the
Sciences of Complexity vol.~VIII, Addison-Wesley (1990), 425 - 458.

\bibitem{Gri84} Griffiths, R. B.: Consistent histories and the interpretation of quantum
mechanics, \textit{J. Stat. Phys.} \textbf{36}, 219 - 272 (1984).

\bibitem{Hal95} Halliwell, J. J.: A review of the decoherent histories approach to quantum
mechanics, \textit{Ann. N. Y. Acad. Sci.} \textbf{755}, 726 - 740 (1995).

\bibitem{H76} Helstrom, C. W.: \textit{Quantum Detection and Estimation Theory}, Mathematics
in Science and Engineering, vol.~123, Academic Press, New York (1976).

\bibitem{I94} Isham, C. J.: Quantum logic and the histories approach to quantum theory,
\textit{J. Math. Phys.} \textbf{35}, 2157 - 2185 (1994).

\bibitem{J94} Jozsa, R.: Fidelity for mixed quantum states, \textit{J. Mod. Opt.} \textbf{41},
2315 - 2323 (1994).

\bibitem{K48} Kakutani, S.: On equivalence of infinite product measures, \textit{Ann. Math.}
\textbf{49}, 214 - 224 (1948).

\bibitem{LesTP} Lesniewski, A.: A complete ultrametric on von Neumann's incomplete tensor products, arXiv:2607.09627 (2026). 

\bibitem{Omn92} Omn\`es, R.: Consistent interpretations of quantum mechanics,
\textit{Rev. Mod. Phys.} \textbf{64}, 339 - 382 (1992).

\bibitem{P67} Powers, R. T.: Representations of uniformly hyperfinite algebras and their
associated von Neumann rings, \textit{Ann. Math.} \textbf{86}, 138 - 171 (1967).

\bibitem{PCRR26} Pilatowsky-Cameo, S., Cotler, J., Ranard, D., and Riedel, C. J.: Emergent
classicality and wavefunction branching in an isolated quantum many-body system,
arXiv:2609.19254 (2026).

\bibitem{Rie17} Riedel, C. J.: Classical branch structure from spatial redundancy in a many-body
wave function, \textit{Phys. Rev. Lett.} \textbf{118}, 120402 (2017).

\bibitem{RieQV25} Riedel, C. J.: Wavefunction branches demand a definition!,
\textit{Quantum Views} \textbf{9}, 85 (2025).

\bibitem{S19} Schlosshauer, M.: Quantum decoherence, \textit{Phys. Rep.} \textbf{831}, 1 - 57 (2019).

\bibitem{Str24} Strasberg, P., Reinhard, T. E., and Schindler, J.: First principles numerical
demonstration of emergent decoherent histories, \textit{Phys. Rev. X} \textbf{14}, 041027 (2024).

\bibitem{TM25} Taylor, J. K., and McCulloch, I. P.: Wavefunction branching: when you can't tell
pure states from mixed states, \textit{Quantum} \textbf{9}, 1670 (2025).

\bibitem{U76} Uhlmann, A.: The `transition probability' in the state space of a $^\ast$-algebra,
\textit{Rep. Math. Phys.} \textbf{9}, 273 - 279 (1976).

\bibitem{vN39} von Neumann, J.: On infinite direct products, \textit{Compositio Math.} \textbf{6},
1 - 77 (1939).

\bibitem{WS25} Wang, J., and Strasberg, P.: Decoherence of histories: chaotic versus integrable
systems, \textit{Phys. Rev. Lett.} \textbf{134}, 220401 (2025).

\bibitem{Wei22} Weingarten, D.: Macroscopic reality from quantum complexity,
\textit{Found. Phys.} \textbf{52}, 45 (2022).

\bibitem{W74} Welch, L. R.: Lower bounds on the maximum cross correlation of signals,
\textit{IEEE Trans. Inform. Theory} \textbf{20}, 397 - 399 (1974).

\bibitem{Z03} Zurek, W. H.: Decoherence, einselection, and the quantum origins of the classical,
\textit{Rev. Mod. Phys.} \textbf{75}, 715 - 775 (2003).

\end{thebibliography}
\end{document}